\newcommand{\bF}{\mathbb{F}}
\def\BibTeX{{\rm B\kern-.05em{\sc i\kern-.025em b}\kern-.08em
    T\kern-.1667em\lower.7ex\hbox{E}\kern-.125emX}}
\newtheorem{definition}{Definition}
\newtheorem{theorem}{Theorem}
\newtheorem{lemma}{Lemma}
\newtheorem{remark}{Remark}
\newtheorem{example}{Example}
\newif\ifextended
\begin{document}
\pagestyle{plain}
\title{
A Generalized Covering\\Algorithm for Chained Codes
}


 \author{\textbf{Ben Langton}$^\star$ and \textbf{Netanel Raviv}$^\dagger$\\
 $^\star$Harvey Mudd College, Claremont, CA 91711, USA\\
 $^\dagger$Washington University in St. Louis, St. Louis, MO 63103, USA.}


\maketitle

\begin{abstract}
The covering radius is a fundamental property of linear codes that characterizes the trade-off between storage and access in linear data-query protocols. The generalized covering radius was recently defined by Elimelech and Schwartz for applications in joint-recovery of linear data-queries. In this work we extend a known bound on the ordinary covering radius to the generalized one for all codes satisfying the chain condition---a known condition which is satisfied by most known families of codes. Given a generator matrix of a special form, we also provide an algorithm which finds codewords which cover the input vector(s) within the distance specified by the bound. For the case of Reed-Muller codes we provide efficient construction of such generator matrices, therefore providing a faster alternative to a previous generalized covering algorithm for Reed-Muller codes.
\end{abstract}

\begin{IEEEkeywords}
Covering codes; Reed-Muller codes.
\end{IEEEkeywords}

\section{Introduction}
The covering radius of a code is the minimum integer~$r$ such that any vector in the space is within Hamming distance at most~$r$ from a codeword of the code. This fundamental property of codes is very well understood~\cite{Cohen2005CoveringC}, and has applications in low-access algorithms for linear queries in databases \cite{Graham1985OnTC,Cohen2005CoveringC}.

Motivated by joint-recovery of multiple linear queries simultaneously, the \textit{generalized} covering radius was recently introduced in~\cite{elimelech2021generalized,elimelech2022rmgeneralized}. Roughly speaking, the $t$-th generalized covering radius is the maximum number of coordinates in which $t$ vectors can differ from $t$ 
codewords, across all $t$-subsets of the code; for~$t=1$ this definition specifies the (ordinary) covering radius.


While little is known about the generalized covering radius for most codes, upper and lower bounds were established for Reed-Muller codes in \cite{elimelech2022rmgeneralized} and for some other codes and values of $t$ in \cite{elimelech2021generalized}. They also provided an algorithm which, given a set of~$t$ vectors, finds~$t$ codewords within their bound for Reed-Muller codes. 

As noted in \cite{elimelech2021generalized}, the generalized covering radius is closely related to Generalized Hamming Weights (GHW), introduced by Wei in \cite{hammingweights,chaincondition}. In this paper we directly link the two by showing that for every code which satisfies the \textit{chain condition}, the generalized covering radius can be bounded using the GHWs. The chain condition~\cite{chaincondition} asserts that there exists a generator matrix which realizes the GHWs, and is satisfied by most important families of codes.
Our bound follows from combining  the bound in~\cite{coveringradiusbound} for GHWs with one of the equivalent definitions of the generalized covering radius given in~\cite{elimelech2021generalized}.

Our bound implies an efficient algorithm that for a given set of~$t$ vectors, finds a corresponding set of~$t$ codewords which differ from the~$t$ vectors by at most the value of the bound. This algorithm requires a \textit{chained generator matrix}, a generator matrix of a special form which is guaranteed to exist in all chained codes. 

We further show how a chained generator matrix for Reed-Muller codes can be found easily. This results in a generalized covering algorithm for Reed-Muller codes, which exponentially improves the runtime of the one given in~\cite{elimelech2022rmgeneralized}, albeit with a potentially negative impact on performance. Our algorithm also applies to~$q$-ary Reed-Muller codes for~$q>2$, that were not addressed by~\cite{elimelech2022rmgeneralized}.

Finally, since the algorithm provides codewords up to the value of the bound, which might be larger than the generalized covering radius of the code, we computationally compare our bound to the best known ones for Reed-Muller codes~\cite{elimelech2022rmgeneralized}. Our experiments suggest that the given bound outperforms the best known ones for Reed-Muller codes in some parameter regimes.


\section{Preliminaries}

\subsection{The Generalized Covering Radius}

The $t$-th generalized covering radius~$R_t(C)$ \cite{elimelech2021generalized} is defined as follows, where~$[t]$ denotes~$\{1,\ldots,t\}$, and~$\binom{[n]}{r}$ denotes the family of subsets of size~$r$ of~$[n]$.

\begin{definition}\cite{elimelech2021generalized}\label{def:gereralizedradius2}
Let $C$ be an $[n,k]$ code over $\mathbb{F}_q$. For $t \in \mathbb{N}$, the~$t$-th generalized covering radius~$R_t(C)$ is the minimal integer~$r$ such that for every $v_1, v_2, .., v_t \in \mathbb{F}^n_q$, there exist codewords $c_1, c_2, ..., c_t\in C$ and $I \in {[n] \choose{r}}$ such that $\textup{supp}(v_i - c_i) \subseteq I$ for all $i \in [t]$. 
\end{definition} 

It can be readily verified that this definition specifies to the well-known covering radius by setting~$t=1$. Intuitively, the $t$-th generalized covering radius~$r$ is the maximum number of coordinates in which~$t$ vectors can differ from the~$t$ codewords which minimizes this number. The quantity~$R_t(C)$ can be defined in multiple equivalent ways, out of which we make use of the following one in the sequel.

\begin{definition}\cite{elimelech2021generalized}\label{def:generalizedradius4} 
Let $C$ be an $[n,k]$ linear code over $\mathbb{F}_q$ with generator matrix $G$, and $C_t$ be the $[n,k]$ linear code over $\mathbb{F}_{q^t}$ with generator matrix $G$. Then $R_t(C) = R_1(C_t)$.
\end{definition}


We present a few basic results about the generalized covering radius below. We first have that the generalized covering radii are monotone increasing for a given code. We omit the reference to a specific code~$C$ whenever unnecessary.

\begin{theorem}\cite{elimelech2021generalized}\label{monotone}
$R_1 \leq R_2 \leq ... \leq R_{n-k} = n-k$.
\end{theorem}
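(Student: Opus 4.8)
The plan is to establish the chain in three pieces: the monotonicity $R_t \le R_{t+1}$ for $1 \le t < n-k$, the upper bound $R_{n-k} \le n-k$, and the matching lower bound $R_{n-k} \ge n-k$. For monotonicity I would argue directly from Definition~\ref{def:gereralizedradius2}. Set $r = R_{t+1}$. Given arbitrary $v_1,\dots,v_t \in \mathbb{F}_q^n$, append the zero vector $v_{t+1} = 0$ and invoke the defining property of $R_{t+1}$: there exist codewords $c_1,\dots,c_{t+1}$ and a set $I \in \binom{[n]}{r}$ with $\mathrm{supp}(v_i - c_i) \subseteq I$ for all $i \in [t+1]$. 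Discarding the last index shows that the same $I$ covers the first $t$ differences, so $R_t \le r = R_{t+1}$. Iterating yields $R_1 \le R_2 \le \dots \le R_{n-k}$.

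For the upper bound I would pass to Definition~\ref{def:generalizedradius4}, which gives $R_{n-k}(C) = R_1(C_{n-k})$, where $C_{n-k}$ is the $[n,k]$ code over $\mathbb{F}_{q^{n-k}}$ generated by $G$. Any $[n,k]$ code has ordinary covering radius at most its redundancy $n-k$: after a column permutation bringing $G$ to systematic form $[I_k \mid A]$, every $v$ agrees on the $k$ systematic coordinates with the codeword whose message is the restriction of $v$ to those coordinates, so $v$ is covered by a set of size $n-k$. Hence $R_{n-k}(C) = R_1(C_{n-k}) \le n-k$.

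The lower bound is the crux, and I expect it to be the main obstacle. Write $m = n-k$ and let $H$ be an $m \times n$ parity-check matrix of $C$; the same $H$ checks $C_m$ over $\mathbb{F}_{q^m}$. Since the ordinary covering radius equals the largest coset-leader weight, $R_1(C_m) \ge m$ follows once I exhibit a syndrome $s \in \mathbb{F}_{q^m}^m$ that is not an $\mathbb{F}_{q^m}$-combination of any $m-1$ columns of $H$. As $H$ has rank $m$, after a column permutation assume its first $m$ columns $h_1,\dots,h_m$ form an $\mathbb{F}_q$-basis of $\mathbb{F}_q^m$; fix an $\mathbb{F}_q$-basis $\alpha_1,\dots,\alpha_m$ of $\mathbb{F}_{q^m}$ and set $s = \sum_{j=1}^m \alpha_j h_j$. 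If $s$ lay in the span of some $m-1$ columns, those columns would span an $\mathbb{F}_q$-subspace $W \subsetneq \mathbb{F}_q^m$, so there is $0 \ne u \in \mathbb{F}_q^m$ with $u^\top w = 0$ for all $w \in W$, whence $u^\top s = 0$; but $u^\top s = \sum_j \alpha_j (u^\top h_j)$ has coefficients $u^\top h_j \in \mathbb{F}_q$ that are not all zero, contradicting the $\mathbb{F}_q$-linear independence of the $\alpha_j$. Thus $R_{n-k} = R_1(C_m) \ge m$, and with the upper bound this gives $R_{n-k} = n-k$, closing the chain. The delicate point—and the reason equality occurs precisely at $t = n-k$—is that the extension degree $m$ matches the number of independent columns, so a basis of $\mathbb{F}_{q^m}$ over $\mathbb{F}_q$ can be spread across $m$ columns in a way no smaller column set can reproduce.
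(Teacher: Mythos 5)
Your proof is correct, and a preliminary remark is in order: the paper itself offers no proof of Theorem~\ref{monotone} --- it is quoted from \cite{elimelech2021generalized} --- so there is no in-paper argument to compare against, and yours stands as a self-contained derivation. All three pieces check out. The monotonicity step (padding with $v_{t+1}=0$, invoking the defining property of $R_{t+1}$, then discarding the last index) is exactly the right use of Definition~\ref{def:gereralizedradius2}. The upper bound via Definition~\ref{def:generalizedradius4} and the systematic-form redundancy bound is sound; the one fact you use tacitly is that $G$ retains rank $k$ over the extension field $\mathbb{F}_{q^{n-k}}$ (true, since rank is unchanged by field extension), which is what makes $C_{n-k}$ an $[n,k]$ code to which the redundancy bound applies. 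The lower bound is indeed the crux, and your construction works: the same $H$ is a parity-check matrix for $C_m$ over $\mathbb{F}_{q^m}$ (since $GH^{\top}=0$ and a dimension count gives $\ker H = C_m$ over the big field), and spreading an $\mathbb{F}_q$-basis $\alpha_1,\dots,\alpha_m$ of $\mathbb{F}_{q^m}$ across $m$ columns that are $\mathbb{F}_q$-independent yields a syndrome that no $m-1$ columns can reproduce. Your orthogonality argument is airtight: if $s$ lay in the $\mathbb{F}_{q^m}$-span of $m-1$ columns, a nonzero $u\in\mathbb{F}_q^m$ annihilating their $\mathbb{F}_q$-span would give $u^{\top}s=0$, while $u^{\top}s=\sum_j \alpha_j (u^{\top}h_j)\neq 0$ because the coefficients $u^{\top}h_j\in\mathbb{F}_q$ are not all zero and the $\alpha_j$ are $\mathbb{F}_q$-independent. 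This also correctly isolates why equality occurs at $t=n-k$: the extension degree must be large enough for a basis to be spread over $m$ independent columns.
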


Clearly, in order to cover any~$t$ given vectors, one can use the ordinary covering radius~$t$ times, which gives rise to the next theorem. The crux of studying~$R_t$ is in cases which this inequality is strict.

\begin{theorem}\cite{elimelech2021generalized}\label{subadditive}
 Let $C$ be an $[n, k]$ code over $\mathbb{F}_q$. Then for all
$t_1$, $t_2 \in \mathbb{N}$, we have $R_{t_1+t_2} \leq R_{t_1} + R_{t_2}$
\end{theorem}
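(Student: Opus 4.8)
The plan is to apply Definition~\ref{def:gereralizedradius2} directly, partitioning the $t_1+t_2$ input vectors into two blocks and covering each block with its own generalized covering radius. First I would fix arbitrary $v_1,\ldots,v_{t_1+t_2}\in\mathbb{F}_q^n$ and split them into the first $t_1$ vectors and the remaining $t_2$ vectors. Applying the definition of $R_{t_1}$ to the first block produces codewords $c_1,\ldots,c_{t_1}\in C$ and an index set $I_1\in\binom{[n]}{R_{t_1}}$ with $\textup{supp}(v_i-c_i)\subseteq I_1$ for all $i\in[t_1]$. Independently, applying the definition of $R_{t_2}$ to the second block produces codewords $c_{t_1+1},\ldots,c_{t_1+t_2}\in C$ and an index set $I_2\in\binom{[n]}{R_{t_2}}$ with $\textup{supp}(v_{t_1+j}-c_{t_1+j})\subseteq I_2$ for all $j\in[t_2]$.

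Next I would take the common index set $I=I_1\cup I_2$, so that $|I|\le |I_1|+|I_2|=R_{t_1}+R_{t_2}$, while $\textup{supp}(v_i-c_i)\subseteq I$ holds simultaneously for every $i\in[t_1+t_2]$. This exhibits a single set of coordinates of size at most $R_{t_1}+R_{t_2}$ together with codewords $c_1,\ldots,c_{t_1+t_2}$ that cover all $t_1+t_2$ vectors, which is precisely the covering condition of Definition~\ref{def:gereralizedradius2}. By minimality of $R_{t_1+t_2}$, this yields $R_{t_1+t_2}\le R_{t_1}+R_{t_2}$.

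The only point requiring care is the cardinality convention: the definition demands an index set of a prescribed size rather than of size at most a given value. I expect this to be the sole (minor) obstacle, and it is resolved by noting that enlarging a valid covering set never violates the support containments, so $I$ may be padded to any size up to $n$. When $R_{t_1}+R_{t_2}\le n$ this lets us promote $I$ to a set of size exactly $R_{t_1}+R_{t_2}$; when $R_{t_1}+R_{t_2}> n$ the claimed inequality is vacuous, since $R_{t_1+t_2}\le n$ always holds by taking $I=[n]$. Thus the bound is valid in all cases, with no difficulty beyond this bookkeeping.
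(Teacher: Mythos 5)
Your proof is correct. Note that the paper itself gives no proof of this theorem---it is cited from \cite{elimelech2021generalized}, preceded only by the informal remark that one can cover the given vectors by invoking the covering radius repeatedly---and your block-splitting argument (cover the first $t_1$ vectors within $I_1$, the last $t_2$ within $I_2$, and take $I=I_1\cup I_2$), including the careful padding of $I$ to exact cardinality and the observation that the bound is vacuous when $R_{t_1}+R_{t_2}>n$, is precisely the natural formalization of that idea, so there is nothing to fix.
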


For example, these two theorems readily imply the generalized covering radii of the Hamming code, which is known to have a covering radius of 1. By the fact that $R_1 = 1, R_{n-k} = k$, and Theorems~\ref{monotone} and~\ref{subadditive}, we have that $R_t = t$ for $t \leq n-k$. 


\subsection{Generalized Hamming Weights and Chained Codes}
The generalized Hamming weights, introduced in \cite{elimelech2021generalized}, are a similar extension of minimum distance as the generalized covering radius is to the ordinary covering radius. Recall the definition of a support of a code $\textup{supp}(C) = \{i: \exists (x_1,...,x_n) \in C, x_i \neq 0\}$, and define the following.

\begin{definition} \cite{hammingweights}\label{def:dr} The $r$-th generalized Hamming weight of a code~$C$ is $d_r(C) = \min\{  |\operatorname{supp}(D)|:D\subseteq C, \dim(D)=r\}$.
\end{definition}

The \textit{chain condition} is then defined as follows.

\begin{definition} \label{chaincondition}
\cite{chaincondition}
 An $[n,k]$ linear code $C$ with GHWs $d_1(C),d_2(C),...,d_k(C)$ satisfies the chain condition (abbrv. \emph{chained code}) if there are $k$ linearly independent vectors $c_1,c_2,...c_k$ such that $d_r(C) = |\bigcup_{i=1}^r \textup{supp}(c_i) | $ for every~$r\in\{1,\ldots,k\}$.
\end{definition}

Intuitively, the span of each~$t$ prefix of the basis~$c_1,\ldots,c_k$ is a subcode which realizes the minimum in Definition~\ref{def:dr}. Using such basis as rows of a generator matrix, we define the following.


\begin{definition}\label{matrixchaincondition}
For a chained code~$C$, a generator matrix~$\Gamma$ with rows~$c_1$ (top row) through $c_k$ (bottom row) is called a \emph{chained generator matrix} if each $c_i$ ends with~$n-d_i$ zeros, where $d_i = |\bigcup_{j=1}^i \textup{supp}(c_j)|$. 
\end{definition}

\begin{remark}\label{remark:permuteEntries}
    Given $c_i$'s which realize the GHW hierarchy (i.e., satisfy the condition in Definition~\ref{chaincondition}), we can make each~$c_i$ end with at least $n-d_i$ zeros by permuting the columns so that in each row the $d_i - d_{i-1}$ columns which are new in the support are moved to the end of the nonzero part of each row, starting with $c_1$. Therefore, every chained code has a chained matrix; these matrices will be useful in the sequel for providing a simple generalized covering algorithm.
\end{remark}




For the following theorems, let $C$ be an $[n,k]$ code and $J$ be a subset of the coordinates 
of $C$ with $|J| < n-d_1$. We also assume a generator matrix of $C$ of the form 
\begin{align*}
    \begin{bmatrix}
      g(C_0) & 0 \\ A & g(C_J) 
    \end{bmatrix},
\end{align*}



where $g(C_J)$ is a generator matrix of $C_J$, the projection of $C$ onto the coordinates $J$, and $g(C_0)$ is the generator matrix of $C_0$, the subcode of $C$ which is 0 on $J$ (but does not contain the coordinates in $J$). With this in mind, we give the following lemma, which is used for the generalized covering algorithm. The proof is given for completeness. 


\begin{lemma}\label{additivity}\cite{MATTSON1983453}
$R_1(C) \leq R_1(C_J) + R_1(C_0)$. 
\end{lemma}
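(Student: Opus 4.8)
The plan is to show that one can cover an arbitrary vector $v \in \mathbb{F}_q^n$ by a codeword of $C$ whose error support splits into a part living on $J$ and a part living on the complementary coordinates $[n]\setminus J$, with the two parts controlled by $R_1(C_J)$ and $R_1(C_0)$ respectively. Write the coordinates as the ordered pair $(v_{[n]\setminus J}, v_J)$ to match the block structure of the assumed generator matrix. First I would use the covering property of the projected code $C_J$: since $C_J$ has covering radius $R_1(C_J)$, there is a codeword of $C_J$ within Hamming distance $R_1(C_J)$ of the projection $v_J$. Because the assumed generator matrix has the block form with $g(C_J)$ occupying the $J$-columns of the bottom rows, this codeword of $C_J$ can be lifted to an actual codeword $c' \in C$ whose restriction to $J$ equals the chosen $C_J$-codeword. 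After subtracting $c'$ from $v$, the residual vector agrees with a $C_J$-codeword on $J$, so its $J$-part now differs from a codeword in at most $R_1(C_J)$ positions.

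Next I would handle the complementary coordinates. After the first correction, consider the residual restricted to $[n]\setminus J$; the remaining freedom to modify these coordinates without disturbing the already-corrected $J$-part is exactly the subcode $C_0$, the codewords of $C$ that vanish on $J$ (generated by the top block $g(C_0)$, which is zero on the $J$-columns). Since $C_0$ viewed on its $(n-|J|)$ supporting coordinates has covering radius $R_1(C_0)$, I can find a codeword of $C_0$ that covers the residual's $[n]\setminus J$ part within distance $R_1(C_0)$, and crucially this codeword is $0$ on $J$ so it does not spoil the correction already made there. Adding the two chosen codewords gives a single codeword $c \in C$, and the total number of coordinates in which $v$ differs from $c$ is at most $R_1(C_J) + R_1(C_0)$.

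The two corrections are performed sequentially rather than simultaneously, and the key point that makes the bound additive is the triangularity of the block decomposition: the $J$-correction is chosen first and then frozen, while the $C_0$-correction operates only on coordinates outside $J$ and is invisible on $J$. Concretely, the error vector $v - c$ has support contained in the union of a set of size at most $R_1(C_J)$ inside $J$ and a set of size at most $R_1(C_0)$ inside $[n]\setminus J$; these are disjoint, so the sizes add. Since $v$ was arbitrary and $c \in C$ is the codeword we constructed, the Hamming distance from any vector to $C$ is at most $R_1(C_J)+R_1(C_0)$, which is precisely the claimed bound on $R_1(C)$.

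I expect the main obstacle to be the bookkeeping that justifies the two independence claims cleanly: namely, that every codeword of $C_J$ genuinely lifts to a codeword of $C$ via the bottom block (so the first step is legitimate), and that the subcode used in the second step is exactly $C_0$ with the stated covering radius on its own supporting coordinates (so that correcting outside $J$ never reintroduces errors on $J$). Both facts are essentially the content of the assumed block form of the generator matrix, so the argument is mostly a matter of carefully translating that matrix structure into the lifting-and-freezing operations above rather than any substantive estimation.
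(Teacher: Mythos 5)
Your proposal is correct and follows essentially the same argument as the paper's proof: first lift a $C_J$-codeword covering $v_J$ to a full codeword of $C$ (using the block generator matrix), then cover the residual on the complementary coordinates with a codeword of $C_0$, which is zero on $J$ and so preserves the first correction. The disjoint-support accounting you give is exactly how the paper concludes the additive bound.
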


\begin{proof}
Let $v=(v_0, v_J)$ be an arbitrary vector of length $n$, where~$v_J$ is of length~$|J|$ and $v_0$ is of length $n - |J|$. Then there exists a codeword of~$C$ of the form $(a, c_J)$, where $c_J \in C_J$ satisfies $d_H(v_J,c_J) \leq R_1(C_j)$ (where~$d_H$ denotes Hamming distance). Furthermore, there is a codeword of the form $(c_0,0)$, where $c_0 \in C_0$, such that $d_H(v_0 + a, c_0) \leq R_1(C_0)$. Therefore,~$v$ is of distance at most $R_1(C_J) + R_1(C_0)$ from $(a,c_J) + ( c_0,0)$.
\end{proof}
Furthermore, as proven in \cite{coveringradiusbound}, the generalized Hamming weights are related to the covering radius by the following bound. A full proof is given in order to clarify subsequent parts of the paper.
\begin{theorem}\label{ghwcoveringbound}
\cite{coveringradiusbound} Let $C$ be an $[n,k]$ chained code with GHWs $d_0=0,d_1,d_2,...,d_k$. Then the covering radius~$R(C)$ of~$C$ satisfies 
$$R(C) \leq n - \sum_{r=1}^k \Bigl\lceil\frac{d_r - d_{r-1}}{q}\Bigr\rceil.$$
\end{theorem}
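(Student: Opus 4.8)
The plan is to prove the bound by induction on the dimension $k$, peeling off one row of a chained generator matrix at a time and invoking additivity of the covering radius (Lemma~\ref{additivity}). By the chain condition and Remark~\ref{remark:permuteEntries}, I may assume $C$ is given by a chained generator matrix $\Gamma$ with rows $c_1,\ldots,c_k$, where $c_i$ is supported on the first $d_i$ coordinates; since the covering radius is invariant under coordinate permutations and I may take $C$ to have full support, $d_k=n$. As the generalized Hamming weights are strictly increasing, each block $B_r:=\{d_{r-1}+1,\ldots,d_r\}$ is nonempty of size $d_r-d_{r-1}$.

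For the inductive step I take $J=B_k=\{d_{k-1}+1,\ldots,n\}$, the last block. Only the bottom row $c_k$ is nonzero on $J$---rows $c_1,\ldots,c_{k-1}$ end in zeros from coordinate $d_{k-1}+1$ onward---and $c_k$ is nonzero on every coordinate of $J$. Hence $\Gamma$ already has the block-triangular shape required by Lemma~\ref{additivity}: the projection $C_J$ is the $[d_k-d_{k-1},1]$ code generated by the full-support vector $c_k|_J$, and $C_0$, the subcode vanishing on $J$ with the $J$-coordinates removed, is the $[d_{k-1},k-1]$ code generated by $c_1,\ldots,c_{k-1}$. The structural point to check is that $C_0$ is again chained with hierarchy $d_1,\ldots,d_{k-1}$, so the induction hypothesis applies; this is immediate, since the prefixes $c_1,\ldots,c_r$ realize $d_r$ in $C_0$ exactly as in $C$.

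It remains to bound the two pieces. The induction hypothesis gives $R_1(C_0)\le d_{k-1}-\sum_{r=1}^{k-1}\lceil (d_r-d_{r-1})/q\rceil$. For the one-dimensional code $C_J$ I claim $R_1(C_J)\le (d_k-d_{k-1})-\lceil (d_k-d_{k-1})/q\rceil$: writing $g=c_k|_J$ and $m=d_k-d_{k-1}$, every coordinate $i$ of a target $v\in\mathbb{F}_q^m$ matches $\lambda g$ for exactly one scalar $\lambda$ (namely $\lambda=v_i/g_i$, using $g_i\ne 0$), so the $q$ codewords $\lambda g$ together account for all $m$ agreements and, by pigeonhole, some $\lambda$ agrees with $v$ in at least $\lceil m/q\rceil$ positions. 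Substituting both estimates into Lemma~\ref{additivity} gives $R_1(C)\le R_1(C_J)+R_1(C_0)\le d_k-\sum_{r=1}^{k}\lceil (d_r-d_{r-1})/q\rceil=n-\sum_{r=1}^{k}\lceil (d_r-d_{r-1})/q\rceil$, which closes the induction; the base case $k=1$ is precisely this pigeonhole bound for the full-support $[n,1]$ code.

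The main obstacle I expect is the bookkeeping needed to guarantee that Lemma~\ref{additivity} genuinely applies at each step---confirming that only $c_k$ meets the final block and that $C_0$ inherits the chain condition with the truncated hierarchy $d_1,\ldots,d_{k-1}$. Once this structural claim is secured, the remaining arithmetic (the pigeonhole estimate and the telescoping identity $\sum_{r=1}^k (d_r-d_{r-1})=d_k=n$) is routine.
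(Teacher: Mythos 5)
Your proof is correct and is essentially the paper's own argument: the paper runs the same induction ``bottom-up'' (building the row-spans $C_1\subset C_2\subset\cdots\subset C_k$ of the top-left submatrices of the chained matrix and applying Lemma~\ref{additivity} with $J$ equal to the newest block of coordinates), whereas you peel off the last row and block, but the decomposition, the pigeonhole bound for the rank-one projection, and the telescoping sum are identical. The only extra bookkeeping your framing requires---that $\operatorname{span}\{c_1,\ldots,c_{k-1}\}$ is again chained with hierarchy $d_1,\ldots,d_{k-1}$, which the paper sidesteps by stating its induction hypothesis directly about the submatrices $M_r$---is handled correctly.
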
 

\begin{proof} 
Let~$C$ be an $[n,k]$ chained code (Definition~\ref{chaincondition}) with GHWs ${d_1, d_2, \ldots, d_k}$, and let $\Gamma(C)$ be a chained generator matrix of~$C$ with rows~$c_1,\ldots,c_k$ (Definition \ref{matrixchaincondition}), arranged as in Remark~\ref{remark:permuteEntries}. Further, for~$r\in[k]$ let~$M_r$ be the top-left $r\times d_r$ submatrix of~$\Gamma(C)$, let~$s_{r,1},\ldots,s_{r,r}\in\bF_q^{d_r}$ be its rows (numbered top to bottom), and let~$C_r$ be the row-span of~$M_r$. Since~$C_k=C$, the theorem can be proved by induction on the dimension of the code, as follows.


In the base case, notice that~$C_1=\operatorname{span}\{s_{1,1}\}$, and~$s_{1,1}\in\bF_q^{d_1}$ has no zero entries.
Fix an arbitrary vector~$v\in\bF_q^{d_1}$ and denote~$s_{1,1}=(\sigma_1,\ldots,\sigma_{d_1})$. By the pigeonhole principle, the multi-set~$\{\{ v_j/\sigma_{j}\}\}_{j=1}^{d_1}$ contains some element at least $\lceil d_1/q \rceil$ times, and let~$\lambda$ be that element. It follows that~$d_H(\lambda s_{1,1},v)\le d_1-\lceil d_1/q\rceil$, which implies that 
$R(C_1) \leq d_1 - \lceil d_1/q \rceil$ and concludes the base case.

In the inductive step, we assume that $R(C_r) \leq d_r - \sum_{i=1}^r \lceil (d_i - d_{i-1})/q \rceil$ and observe that by construction 
\begin{align*}
    M_{r+1}= 
    \begin{pmatrix}
        \overbrace{M_{r}}^{d_r} & \overbrace{\begin{matrix} 0 & \ldots & 0 \end{matrix}}^{d_{r+1}-d_r} 
         \\ \multicolumn{2}{c}{s_{r+1,r+1}} 
    \end{pmatrix}
\end{align*}


Since the last $d_{r+1} - d_r$ elements of $s_{r+1,r+1}$ must be nonzero (since~$M_{r+1}$ does not have zero columns), we have that 
$$R(C_{r+1}) \leq R(C_r) + d_{r+1} - d_r - \lceil (d_{r+1} - d_{r})/q \rceil$$
by Lemma~\ref{additivity}, and another similar use of the pigeonhole principle. Using our induction hypothesis, this implies that 
\begin{align*}
    R(C_{r+1}) &\leq d_r - \sum_{i=1}^r \lceil (d_i - d_{i-1})/q \rceil \\
    &\phantom{=}+ d_{r+1} - d_r - \lceil (d_{r+1} - d_{r})/q \rceil \\
    &= d_{r+1} - \sum_{i=1}^{r+1} \lceil (d_i - d_{i-1})/q \rceil,
\end{align*}
which completes the proof.
\end{proof} 
A \textit{covering algorithm} is evident from the proof of Theorem~\ref{ghwcoveringbound} and Lemma~\ref{additivity}. The algorithm receives a word~$v\in\bF_q^n$ to cover, and outputs a codeword within distance at most $n - \sum_{r=1}^k \Bigl\lceil\frac{d_r - d_{r-1}}{q}\Bigr\rceil$ from~$v$. The algorithm requires a chained generator matrix, and proceeds by covering~$v$ sequentially by~$C_1,C_2\ldots,C_k$, by finding the proper scalar multiple which covers the corresponding part of~$v$, and subtracting the covering codeword from what is left to cover. 
\ifextended
An example is given in Appendix~\ref{section:1coveringalgorithm}.
\else 
An example is given in (CITE ARXIV) 
\fi




\subsection{Reed-Muller codes}
Reed-Muller codes are a central topic in coding theory, and defined as follows.

\begin{definition}
For a field~$\mathbb{F}_q$ and integers~$r\le (q-1)m$
an $RM_q(r,m)$ code is defined as the set of vectors 
\begin{align*}
    &RM_q(r, m) \triangleq\\ &\{(f(\alpha))_{\alpha\in\mathbb{F}_q^m} : f \in \mathbb{F}_q[x_1, x_2, \ldots , x_m], \deg(f) \leq r\}
\end{align*}
Furthermore, because $x^q = x$, we only consider polynomials where the degree of each $x_i$ is less than $q$. 
\end{definition}

The binary codes~$RM_2(r,m)$ can also be defined recursively using the so-called ``$(u,u+v)$ construction''
\begin{align*}
    &RM_2(r,m) \triangleq\\
    & \{(u,u+v): u \in RM_2(r,m-1), v \in RM_2(r-1,m-1) \}.
\end{align*}

Additionally, the GHWs for binary Reed-Muller codes are known. Let $\rho(r, m) = \sum_{i=0}^r {m\choose i}$ be the dimension of an $RM(r,m)$ code, and define the the \textit{canonical $(r,m)$-representation} ($(r,m)$-representation, for short) of a number~$t$ as follows: 

\begin{theorem} \label{decomposition}
\cite{hammingweights} Given $r,m$, any $0 \leq t \leq \rho(r,m)$ can be written as $$t = \sum_{i=1}^k \rho(r_i, m_i)$$ where the $r_i$ are decreasing, and $m_i - r_i = m - r  - i + 1$. In addition, this representation is unique. 
\end{theorem}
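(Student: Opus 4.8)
The plan is to prove existence and uniqueness simultaneously by induction, the engine being the Pascal-type recurrence
\begin{equation*}
\rho(r,m)=\rho(r,m-1)+\rho(r-1,m-1),
\end{equation*}
which is immediate from $\binom{m}{i}=\binom{m-1}{i}+\binom{m-1}{i-1}$. Writing $\delta=m-r$, the constraint $m_i-r_i=\delta-(i-1)$ forces the first summand to be $\rho(r_1,r_1+\delta)$, and its admissible values, as $r_1$ runs over $0,1,\dots,r$, are exactly $\rho(0,\delta),\rho(1,1+\delta),\dots,\rho(r,m)$. I would first record that $r_1\mapsto\rho(r_1,r_1+\delta)$ is strictly increasing and obeys the difference identity
\begin{equation*}
\rho(r_1{+}1,r_1{+}1{+}\delta)-\rho(r_1,r_1{+}\delta)=\rho(r_1{+}1,r_1{+}\delta),
\end{equation*}
again a one-line consequence of the recurrence; this is what makes a greedy choice of $r_1$ well defined, and monotonicity together with $t\le\rho(r,m)=\rho(r,r+\delta)$ keeps $r_1\le r$.

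For existence I would run the greedy algorithm: take $r_1$ maximal with $\rho(r_1,r_1+\delta)\le t$, subtract, and recurse on the remainder with the gap lowered by one and the degrees from then on bounded by $r_1$. The step I expect to be the main obstacle is a \emph{range lemma}: the integers representable under the constraints ``degrees (weakly) decreasing and bounded by $R$, initial gap $\Delta$'' fill exactly the interval $[0,\rho(R+1,R+1+\Delta))$. The subtle point is that the gap is allowed to go negative, so the summands do not stop at $i=\delta+1$ but continue as long as $m_i\ge0$, contributing terms $\rho(R,\mu)=2^{\mu}$ with $\mu<R$; taking every degree equal to $R$ gives the maximum $\sum_{\mu=0}^{R+\Delta}\rho(R,\mu)$, and the hockey-stick identity collapses this telescoping sum to $\rho(R+1,R+1+\Delta)-1$. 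Granting the lemma, maximality of $r_1$ gives $t-\rho(r_1,r_1+\delta)<\rho(r_1+1,r_1+\delta)=\rho(r_1+1,r_1+1+(\delta-1))$, which is precisely the top of the representable range for the subproblem with bound $R=r_1$ and gap $\Delta=\delta-1$, so the induction applies and the remainder is represented.

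Uniqueness follows from the same lemma by showing the leading choice is forced: any representation with a strictly smaller leading degree has total value at most $\rho(r_1,r_1+\delta)-1<t$, while a larger leading degree overshoots $t$ already in its first term; hence $r_1$ is determined by $t$, and peeling it off and invoking inductive uniqueness for the subproblem closes the argument. The induction runs on the strictly decreasing quantity $R+1+\Delta$ (equal to $m+1$ at the top level), with degenerate base $\rho(R+1,R+1+\Delta)=1$, i.e.\ the empty sum for $t=0$. I expect the only delicate bookkeeping to be the negative-gap regime and the exact termination condition $m_k\ge0$; once the hockey-stick collapse in the range lemma is in place, the remaining algebra is routine.
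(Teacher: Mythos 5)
The paper never proves this statement---it is imported verbatim from \cite{hammingweights} and used as a black box---so there is no in-paper argument to compare against; your proposal has to stand on its own, and it does. The three identities your argument rests on are all correct: the Pascal recurrence $\rho(r,m)=\rho(r,m-1)+\rho(r-1,m-1)$, the difference identity $\rho(r_1{+}1,r_1{+}1{+}\delta)-\rho(r_1,r_1{+}\delta)=\rho(r_1{+}1,r_1{+}\delta)$ (the former with $m=r_1{+}1{+}\delta$), and the telescoping evaluation $\sum_{\mu=0}^{R+\Delta}\rho(R,\mu)=\rho(R{+}1,R{+}1{+}\Delta)-\rho(R{+}1,0)=\rho(R{+}1,R{+}1{+}\Delta)-1$. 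The induction on $R+1+\Delta$ is well-founded because every admissible leading term has $m_1=r_1+\Delta\ge 0$, so the quantity stays nonnegative and strictly drops, and the greedy choice always exists since the smallest admissible term equals $1$ whether the gap is positive or negative. Two points deserve explicit treatment in a full write-up. First, the inequality $t-\rho(r_1,r_1{+}\Delta)<\rho(r_1{+}1,r_1{+}\Delta)$ has two distinct sources: when $r_1$ is strictly below the current degree bound $R$ it follows from maximality plus the difference identity, but when $r_1=R$ it follows instead from the range hypothesis $t<\rho(R{+}1,R{+}1{+}\Delta)$ together with Pascal; both cases close, but they are different and should be separated. Second, your ``termination condition $m_k\ge 0$'' is not mere bookkeeping but is exactly what makes uniqueness true: if terms with $m_i<0$ (or with $r_i<0$, which have value $0$) were admitted, representations could be padded and uniqueness would fail, so nonnegativity of all $m_i$ and $r_i$ must be built into the definition of an admissible representation---consistent with the paper's example $7=\rho(1,4)+\rho(0,2)+\rho(0,1)$, where ``decreasing'' is also seen to mean weakly decreasing, as you assumed. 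With those two points spelled out, your greedy-existence/forced-leading-term argument is a complete, elementary, self-contained proof in the style of combinatorial number systems, which is a genuine service given that the paper itself offers none.
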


\begin{example}
    The canonical representation of~$7$ is $7 = \rho(1,4) + \rho(0,2) + \rho(0,1) $, since we have that~$\rho(1,4)=5$, $\rho(0,2)=1$, and $\rho(0,1)=1$.
\end{example}

Theorem~\ref{decomposition} is used to characterize the GHW hierarchy of Reed-Muller Codes. 

\begin{theorem} \label{Reed-Muller GHWs} 
\cite{hammingweights} $d_t(C) = \sum_{i=1}^k 2^{m_i}$, where $t = \sum_{i=1}^k \rho(r_i, m_i)$.

\end{theorem}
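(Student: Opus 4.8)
The plan is to translate the support-minimization defining $d_t$ into a statement about common zero sets, reduce that to a purely combinatorial extremal problem via a standard-monomial (footprint) bound, and finally solve the combinatorics by matching it to the canonical representation of Theorem~\ref{decomposition}.

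First I would rewrite the generalized Hamming weight through zero sets. Identifying each codeword of $RM_2(r,m)$ with a reduced (square-free) polynomial of degree at most $r$, a subcode $D$ is zero at a point $\alpha\in\bF_2^m$ exactly when every polynomial of $D$ vanishes there; writing $Z(D)$ for this common zero set gives $|\operatorname{supp}(D)| = 2^m - |Z(D)|$. Hence $d_t(RM_2(r,m)) = 2^m - \max_D |Z(D)|$ over $t$-dimensional subcodes $D$, and the target formula is equivalent to $\max_D|Z(D)| = 2^m - \sum_{i=1}^k 2^{m_i}$.

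Second I would reduce to combinatorics. Identifying a square-free monomial $x^S$ with its exponent set $S\subseteq[m]$, one has $x^S(\alpha)=1$ iff $S\subseteq\operatorname{supp}(\alpha)$. Thus for a monomial subcode $D=\operatorname{span}\{x^S : S\in\cM\}$ with $\cM\subseteq\binom{[m]}{\le r}$ and $|\cM|=t$, a point lies in $Z(D)$ iff its support contains no member of $\cM$, so $|Z(D)| = 2^m - |\mathcal U(\cM)|$, where $\mathcal U(\cM)=\{T\subseteq[m] : T\supseteq S \text{ for some } S\in\cM\}$ is the up-closure. For an arbitrary subcode I would fix a monomial order and apply the footprint bound to the ideal generated by $D$ together with the field equations $x_i^2-x_i$: its standard monomials are square-free and avoid all multiples of the leading monomials $\mathrm{LM}(D)$, giving $|Z(D)|\le 2^m - |\mathcal U(\mathrm{LM}(D))|$, which is exactly the monomial count with $\cM=\mathrm{LM}(D)$. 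Since this bound is attained by the monomial subcode $\operatorname{span}\{x^S : S\in\cM\}$, the whole theorem reduces to the extremal identity
\[
\min_{\substack{\cM\subseteq\binom{[m]}{\le r}\\ |\cM|=t}} |\mathcal U(\cM)| = \sum_{i=1}^k 2^{m_i}.
\]

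Third I would prove this identity. For the upper bound I would exhibit a nested family matching the canonical representation: for block $i$ take a prefix $P_i$ of size $r-r_i$ and adjoin all sets $P_i\cup S'$ with $S'$ ranging over the subsets of size at most $r_i$ of a fresh block of $m_i$ coordinates. The constraint $m_i-r_i = m-r-i+1$ guarantees every generator has size at most $r$ and that block $i$ contributes exactly $2^{m_i}$ previously-uncovered sets to $\mathcal U(\cM)$, so $|\cM|=\sum_i\rho(r_i,m_i)=t$ and $|\mathcal U(\cM)|=\sum_i 2^{m_i}$. For the lower bound, splitting on the last coordinate yields $|\mathcal U(\cM)| = |\mathcal U(\cM_0)| + |\mathcal U(\cM_0\cup\cM_1)|$, where $\cM_0,\cM_1\subseteq 2^{[m-1]}$ collect the members of $\cM$ that avoid, respectively contain, coordinate $m$ (with $\cM_1$ of degree one lower), and I would run a compression argument to reduce to an initial-segment family, then induct on $m$ while tracking how the canonical representation of $t$ decomposes across the split.

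The main obstacle is this final step: showing the nested family is optimal, i.e. that no $t$ sets of size at most $r$ achieve a strictly smaller up-closure. The coupling between $\cM_0$ and $\cM_0\cup\cM_1$ in the recursion defeats a naive induction, so the real work is a Kruskal--Katona-type shifting argument proving an initial-segment family is extremal, followed by careful bookkeeping confirming that the recursion's minimum is precisely $\sum_i 2^{m_i}$ for the canonical representation of $t$.
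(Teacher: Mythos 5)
First, a point of reference: the paper does not prove this statement at all --- it is quoted from Wei~\cite{hammingweights} as a known characterization of the GHW hierarchy of binary Reed--Muller codes, so there is no in-paper proof to compare against. Judged on its own, your reduction is sound, and it is essentially the footprint/standard-monomial route (the approach used in~\cite{qaryRM} for the $q$-ary generalization, rather than Wei's original $(u,u+v)$-based induction). Identifying codewords with reduced square-free polynomials, the identity $|\operatorname{supp}(D)| = 2^m - |Z(D)|$ is correct; for a $t$-dimensional subcode $D$, Gaussian elimination under a fixed monomial order yields exactly $t$ distinct leading monomials, all square-free of degree at most $r$; the footprint bound then gives $|Z(D)| \le 2^m - |\mathcal{U}(\mathrm{LM}(D))|$, with equality for monomial subcodes, so indeed $d_t = \min_{\cM} |\mathcal{U}(\cM)|$ over families $\cM$ of $t$ subsets of $[m]$ of size at most $r$. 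Your splitting recursion $|\mathcal{U}(\cM)| = |\mathcal{U}(\cM_0)| + |\mathcal{U}(\cM_0 \cup \cM_1)|$ also checks out, and your nested family yields the attainability half, $d_t \le \sum_{i=1}^k 2^{m_i}$.

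The genuine gap is the lower bound, which you yourself flag as ``the real work'': showing that \emph{no} family of $t$ sets of size at most $r$ has up-closure smaller than $\sum_{i=1}^k 2^{m_i}$. This is not a routine corollary of Kruskal--Katona; it is the substance of the entire theorem, equivalent to an extremal statement about down-sets whose intersection with levels $0,\dots,r$ of the Boolean lattice is constrained --- precisely the Kruskal--Katona-type generalization that~\cite{qaryRM} had to prove as a standalone theorem. Gesturing at ``a compression argument plus bookkeeping'' does not fill this: one must define shifts that preserve the constraints ($|\cM|=t$, all sets of size at most $r$) while not increasing $|\mathcal{U}(\cM)|$, show the compressed extremal family is the initial-segment family you constructed, and verify that the resulting recursion reproduces exactly the canonical $(r,m)$-representation of $t$ from Theorem~\ref{decomposition} --- and, as you note, the coupling of $\cM_0$ with $\cM_0\cup\cM_1$ in the recursion is exactly where a naive induction breaks. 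Until that lemma is supplied, your argument establishes only the inequality $d_t \le \sum_{i=1}^k 2^{m_i}$, not the stated equality.
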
 
Similarly, a slightly more involved expression for the GHWs of~$q$-ary Reed-Muller codes is known for~$q>2$~\cite{qaryRM}, and will be discussed in the sequel.


\section{The bound and the algorithm}

\subsection{A simple bound}
In this section we devise a bound on the generalized covering radius of a given code using its GHWs. The bound is based on Theorem~\ref{ghwcoveringbound} alongside Definition~\ref{def:generalizedradius4}, and Lemma~\ref{lemma:GHWscoincide} which follows, whose proof is trivial given the following alternative definition of~$d_r(C)$.
\begin{definition}\cite{hammingweights}
   $ d_r(C)= \min \{|I|:  I\subseteq [n], |I|-\operatorname{rank}(H_I)\geq r\} $, where $H$ is a parity check matrix and $H_I$ denotes the submatrix with columns of indices in I. 
\end{definition}

\begin{lemma}\label{lemma:GHWscoincide}
Let~$C$ be an~$[n,k]_q$ linear code with generator matrix~$G$, and for~$1\le t\le n-k$ let~$C_t\triangleq\{ xG\vert x\in\mathbb{F}_{q^t}^k  \} $. Then, the GHWs of~$C$ and~$C_t$ coincide.
\end{lemma}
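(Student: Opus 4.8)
The plan is to lean entirely on the alternative characterization
$d_r(C) = \min\{|I| : I \subseteq [n],\ |I| - \operatorname{rank}(H_I) \ge r\}$
stated just above the lemma, and to observe that every quantity entering this formula is insensitive to whether the ambient field is $\mathbb{F}_q$ or $\mathbb{F}_{q^t}$. Once that is established for both sides, the two minimizations range over identical collections of sets $I$, and the GHWs must agree term by term.

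First I would verify that a single matrix serves as a parity-check matrix for both codes. If $H \in \mathbb{F}_q^{(n-k)\times n}$ is a parity-check matrix for $C$, then $G H^\top = 0$ over $\mathbb{F}_q$, and this identity is unaffected by passing to the extension $\mathbb{F}_{q^t} \supseteq \mathbb{F}_q$. Since $C_t$ is by definition generated by the same $G$ over $\mathbb{F}_{q^t}$, every codeword of $C_t$ annihilates $H$; a dimension count then closes the gap, because both $C_t$ and the right null space of $H$ have dimension $k$ over $\mathbb{F}_{q^t}$. Hence $H$ is a legitimate parity-check matrix for $C_t$, and I may apply the alternative definition of $d_r$ to it.

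The key step—indeed essentially the whole content—is that $\operatorname{rank}(H_I)$ does not depend on the field over which it is computed. The submatrix $H_I$ has all of its entries in $\mathbb{F}_q$, and the rank of a matrix equals the largest size of a nonvanishing minor; the minors of $H_I$ are fixed elements of $\mathbb{F}_q$, so whether a given minor is zero is unchanged when $H_I$ is viewed over $\mathbb{F}_{q^t}$. Therefore $\operatorname{rank}_{\mathbb{F}_q}(H_I) = \operatorname{rank}_{\mathbb{F}_{q^t}}(H_I)$ for every $I \subseteq [n]$. Applying the same invariance to $G$ also confirms that $\dim_{\mathbb{F}_{q^t}} C_t = k$, so both hierarchies are indexed by the same range $r \in \{1,\ldots,k\}$, which is needed for the statement even to make sense.

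Finally I would combine these observations: for each $I$ and each $r$, the defining inequality $|I| - \operatorname{rank}(H_I) \ge r$ holds over $\mathbb{F}_q$ if and only if it holds over $\mathbb{F}_{q^t}$, so the feasible sets in the two minimizations coincide and hence so do their minima, giving $d_r(C) = d_r(C_t)$ for all $r$. I do not anticipate a genuine obstacle here—the argument is exactly the field-invariance of matrix rank—so the only point requiring a moment of care is the preliminary check that $H$ really is a parity-check matrix for $C_t$ before the alternative definition is invoked.
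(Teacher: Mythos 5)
Your proof is correct and takes essentially the same approach as the paper: the paper's own proof simply observes that $C$ and $C_t$ share a parity-check matrix and invokes the alternative definition $d_r(C)= \min \{|I|:  I\subseteq [n], |I|-\operatorname{rank}(H_I)\geq r\}$, declaring the rest trivial. Your write-up just makes explicit the two facts the paper leaves implicit --- that $H$ remains a parity-check matrix for $C_t$ over $\mathbb{F}_{q^t}$ (via the dimension count) and that $\operatorname{rank}(H_I)$ is invariant under field extension --- so it is a fleshed-out version of the same argument.
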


\begin{proof} \label{coincidence}
Since $C$ and $C_t$ have the same parity check matrix, it follows trivially from the above definition that they also have the same GHWs. 
\end{proof}
An equally simple proof can be derived from Lemma 4 of \cite{KLOVE1992311}, however we have not explored if there are further connections to this work. 

We are now in a position to state the bound for the generalized covering radius. It is a straightforward combination of Lemma~\ref{lemma:GHWscoincide} with Theorem~\ref{ghwcoveringbound} and Definition~\ref{def:generalizedradius4}.
\begin{theorem} \label{theorem:bound}
$$R_t(C) \leq n - \sum_{r=1}^k \Bigl\lceil \frac{d_r - d_{r-1}}{q^t} \Bigl \rceil \triangleq \mu_t(C).$$
\end{theorem}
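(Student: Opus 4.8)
The plan is to reduce the statement to the ordinary covering-radius bound of Theorem~\ref{ghwcoveringbound}, applied to the extension code~$C_t$ over~$\mathbb{F}_{q^t}$, and then translate back using Definition~\ref{def:generalizedradius4} and Lemma~\ref{lemma:GHWscoincide}. Since Theorem~\ref{ghwcoveringbound} requires a chained code, I read the statement as pertaining to a chained $[n,k]_q$ code~$C$, and the first thing I would establish is that~$C_t$ is itself chained, witnessed by the very same chained generator matrix.

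First I would fix a chained generator matrix~$\Gamma$ of~$C$ with rows~$c_1,\ldots,c_k$ realizing the GHW hierarchy (Definition~\ref{matrixchaincondition}). Reading these rows as vectors over~$\mathbb{F}_{q^t}\supseteq\mathbb{F}_q$, they remain linearly independent and generate~$C_t$. Crucially, the support of a vector does not depend on which field one views it in, so each~$|\bigcup_{i=1}^r\operatorname{supp}(c_i)|$ is unchanged; by Lemma~\ref{lemma:GHWscoincide} this common value equals both~$d_r(C)$ and~$d_r(C_t)$. Hence the same matrix~$\Gamma$ witnesses the chain condition for~$C_t$, and~$C_t$ is a chained $[n,k]$ code over the alphabet~$\mathbb{F}_{q^t}$ of size~$q^t$, with GHWs~$d_0=0,d_1,\ldots,d_k$ identical to those of~$C$.

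Next I would invoke Definition~\ref{def:generalizedradius4} to write~$R_t(C)=R_1(C_t)$, and apply Theorem~\ref{ghwcoveringbound} to~$C_t$, whose alphabet has size~$q^t$. This immediately yields
$$R_1(C_t)\le n-\sum_{r=1}^k\Bigl\lceil\frac{d_r(C_t)-d_{r-1}(C_t)}{q^t}\Bigr\rceil.$$
Substituting~$d_r(C_t)=d_r(C)=d_r$ from Lemma~\ref{lemma:GHWscoincide} gives exactly~$R_t(C)\le\mu_t(C)$, completing the argument.

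There is no substantial obstacle here: the result is a direct composition of three ingredients already in place. The one point requiring care is the transfer of the chain condition from~$C$ to~$C_t$; one must note that extending the scalar field changes neither the supports of the chosen basis vectors nor (by Lemma~\ref{lemma:GHWscoincide}) the GHWs, so the same chained generator matrix remains valid and Theorem~\ref{ghwcoveringbound} applies verbatim with~$q$ replaced by~$q^t$.
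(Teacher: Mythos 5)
Your proposal is correct and takes essentially the same route as the paper's proof: both observe that the chained generator matrix of~$C$ also generates~$C_t$, that the GHWs coincide by Lemma~\ref{lemma:GHWscoincide} so~$C_t$ satisfies the chain condition, and then apply Theorem~\ref{ghwcoveringbound} to~$C_t$ over the alphabet~$\mathbb{F}_{q^t}$ together with Definition~\ref{def:generalizedradius4}. Your writeup merely spells out the details (preservation of supports and linear independence under field extension) that the paper leaves implicit.
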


\begin{proof}
Since the GHWs  of $C$ and $C_t$ coincide, and the generator matrices of $C$ are also the generator matrices of $C_t$, it follows that $C_t$ satisfies the chain condition. Thus, one can apply Theorem~\ref{ghwcoveringbound} to $C_t$ and conclude the proof. 
\end{proof}

In the following subsections the bound from Theorem~\ref{theorem:bound} is used to obtain an efficient generalized covering algorithm for any chained code, and then specified to binary and nonbinary Reed-Muller codes.

\subsection{The algorithm.}
The following algorithm follows the outline of the one which follows from Theorem~\ref{ghwcoveringbound} for the ordinary covering problem; an example of which is given in \ifextended Appendix~\ref{section:1coveringalgorithm} \else (CITE ARXIV) \fi. It applies to any chained code, and returns a set of codewords which cover the input words up the value defined by the bound in Theorem~\ref{theorem:bound}. We assume that a chained matrix (Def.~\ref{matrixchaincondition}) is given as input; as an example, in the sequel it is shown how a chained matrix can be found for Reed Muller codes. In this algorithm we assume some fixed basis~$b_1,\ldots,b_t$ of~$\mathbb{F}_{q^t}$ over~$\mathbb{F}_q$, and denote by~$\binom{[n]}{\le\ell}$ the family of all subsets of~$[n]$ of size at most~$\ell$.



\begin{algorithm}
\caption{A $t$-covering algorithm for any chained code~$C$ with GHWs $d_1,\ldots,d_k$.}\label{alg:cap}
\begin{algorithmic}[1]
\State \textbf{Input:} A chained generator matrix $\Gamma$ for~$C$ and vectors $v_1,\ldots,v_t\in\mathbb{F}_q^n$ to cover.

\State \textbf{Output:} Codewords~$c_1,\ldots,c_t\in C$ such that for every~$i\in[t]$ we have~$\operatorname{supp}(v_i-c_i)\subseteq I$ for some~$I\in\binom{[n]}{\le \mu_t(C)}$. 
\State $u = u_0 \gets \sum_{i=1}^t b_iv_i$ (notice that~$u\in\mathbb{F}_{q^t}^n$).
\For{$i\in\{k,k-1,\ldots,1\}$} 
 \State $r_i \gets$ the~$i$-th row of $\Gamma$.
 \State Find~$a_i\in\mathbb{F}_{q^t}$ which maximizes $$\lvert\left\{ j\in \{ d_{i-1}+1,\ldots,d_i \}: u_j-a_ir_{i,j}=0\right\}\rvert.$$
 \State $u \gets u - a_i r_i$
\EndFor
\State Represent~$u - u_0=\sum_{i=1}^t b_i\ell_i$ for some~$\ell_i$'s in~$\mathbb{F}_q^n$.
\State \textbf{Return}  $\{\ell_i\}_{i=1}^t$.
\end{algorithmic}
\end{algorithm}

\begin{theorem}
The vectors~$\ell_1,\ldots,\ell_t$ returned by Algorithm~\ref{alg:cap} are codewords of~$C$, and there exists~$I\in \binom{[n]}{\le \mu_t(C)}$ such that~$\operatorname{supp}(v_i-\ell_i)\subseteq I$ for every~$i\in[t]$.
\end{theorem}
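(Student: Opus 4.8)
The plan is to show that Algorithm~\ref{alg:cap} is simply the ordinary covering algorithm (implicit in the proof of Theorem~\ref{ghwcoveringbound}) applied to the single vector $u=\sum_{i=1}^t b_i v_i$ inside the extension-field code $C_t$, and then to translate the resulting single-codeword covering statement back into the $t$-codeword statement over $\mathbb{F}_q$ via Definition~\ref{def:generalizedradius4} and Lemma~\ref{lemma:GHWscoincide}. First I would establish that the loop computes a codeword $c\triangleq\sum_{i=1}^k a_i r_i\in C_t$ (the rows $r_i$ generate $C_t$ over $\mathbb{F}_{q^t}$), so that the final value of $u$ is $u_0-c$ and hence $u-u_0=-c$ lies in $C_t$. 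Writing $-c=\sum_{i=1}^t b_i \ell_i$, I would argue that each $\ell_i\in\mathbb{F}_q^n$ is itself a codeword of $C$: since $\{b_i\}$ is an $\mathbb{F}_q$-basis and $c=\sum a_i r_i$ with the rows $r_i$ having entries in $\mathbb{F}_q$, expanding each $a_i$ in the basis $\{b_j\}$ and collecting coefficients shows each $\mathbb{F}_q$-component is an $\mathbb{F}_q$-linear combination of the rows of $\Gamma$, i.e.\ an element of $C$. This handles the first assertion of the theorem.

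Next I would verify the covering guarantee for $u$ in $C_t$. The key structural fact is that $\Gamma$ is a chained generator matrix, so row $r_i$ has its new support exactly on the coordinates $\{d_{i-1}+1,\ldots,d_i\}$ and is zero on coordinates beyond $d_i$. Processing the rows from $i=k$ down to $i=1$ therefore mirrors the inductive telescoping in the proof of Theorem~\ref{ghwcoveringbound}: when row $i$ is handled, coordinates past $d_i$ have already been fixed by later rows and are untouched (since $r_i$ vanishes there), while the choice of $a_i$ maximizing the number of zeros of $u_j-a_i r_{i,j}$ over $j\in\{d_{i-1}+1,\ldots,d_i\}$ leaves, by the pigeonhole argument, at most $(d_i-d_{i-1})-\lceil (d_i-d_{i-1})/q^t\rceil$ nonzero residual coordinates in that block (here the block of new coordinates has no zero entries in $r_i$, exactly as in the theorem's proof, now over $\mathbb{F}_{q^t}$). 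Summing the residual counts over the $k$ disjoint blocks gives that the final $u=u_0-c$ is nonzero in at most $\sum_{r=1}^k\bigl((d_r-d_{r-1})-\lceil (d_r-d_{r-1})/q^t\rceil\bigr)=n-\sum_{r=1}^k\lceil (d_r-d_{r-1})/q^t\rceil=\mu_t(C)$ coordinates. I would define $I\triangleq\operatorname{supp}(u_0-c)\subseteq[n]$, so $|I|\le\mu_t(C)$; the GHWs appearing here are those of $C_t$, which coincide with those of $C$ by Lemma~\ref{lemma:GHWscoincide}, so $\mu_t$ is well-defined from $C$.

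Finally I would convert the support bound on $u_0-c$ into the per-codeword statement. Since $u_0=\sum_i b_i v_i$ and $-c=\sum_i b_i\ell_i$, we have $u_0-c=\sum_i b_i(v_i+\ell_i)$; but Algorithm~\ref{alg:cap} returns $\{\ell_i\}$ as the components of $u-u_0=-c$, so I must be careful with signs and instead write the returned codewords so that $c=\sum_i b_i c_i'$ and check that $v_i-c_i'$ is supported on $I$. Concretely, because $\{b_i\}$ is an $\mathbb{F}_q$-basis, a coordinate $j$ satisfies $(u_0-c)_j=0$ if and only if every $\mathbb{F}_q$-component vanishes at $j$, i.e.\ $(v_i-\ell_i')_j=0$ for all $i\in[t]$ simultaneously (where $\ell_i'$ are the covering codewords with the correct sign). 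Hence $\bigcup_{i=1}^t\operatorname{supp}(v_i-\ell_i')\subseteq I$ with the same set $I$, which is exactly the required conclusion $\operatorname{supp}(v_i-\ell_i')\subseteq I$ for all $i$. The main obstacle I anticipate is purely bookkeeping: getting the signs and the $\mathbb{F}_{q^t}$-to-$\mathbb{F}_q^t$ correspondence exactly right so that the vectors actually returned by the algorithm (line 10) are the codewords that cover the $v_i$, and confirming that the coordinate-wise equivalence ``$(u_0-c)_j=0 \iff (v_i-c_i')_j=0\ \forall i$'' is an honest consequence of $\mathbb{F}_q$-linear independence of the $b_i$ rather than an accident of a particular basis.
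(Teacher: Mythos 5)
Your proposal is correct and takes essentially the same approach as the paper: view the loop as the ordinary chained covering algorithm (Theorem~\ref{ghwcoveringbound}) run on $u_0=\sum_{i} b_i v_i$ over $C_t$, decompose the resulting codeword of $C_t$ into $\mathbb{F}_q$-components along the basis $\{b_i\}$ to obtain codewords of $C$, invoke the block-by-block pigeonhole argument (which the paper cites as ``identical arguments'' and you spell out) to bound the residual weight by $\mu_t(C)$, and use the fact that a coordinate of $\sum_i b_i(v_i-c_i)$ vanishes iff every component vanishes there to extract the common set $I$. Your caution about signs is in fact warranted rather than mere bookkeeping: line 9 of the algorithm represents $u-u_0=-\sum_i a_i r_i$, so the returned $\ell_i$ are the negatives of the true covering codewords; this discrepancy is invisible over $\mathbb{F}_2$ (and is glossed over in the paper's own proof, which silently equates $d_H(u_0,-\sum_i a_ir_i)$ with the final residual weight) but for odd $q$ it requires exactly the sign correction you describe.
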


\begin{proof}
Observe that~$u-u_0=-\sum_{i=1}^k a_i r_i$ for some~$a_i$'s in~$\bF_{q^t}$, and hence there exist~$a_{i,j}$'s in~$\bF_q$ such that
\begin{align*}
    u-u_0=-\sum_{i=1}^k\left( \sum_{j=1}^t a_{i,j} b_j \right)r_i=\sum_{j=1}^tb_j\left( -\sum_{i=1}^k a_{i,j}r_i \right).
\end{align*}
Therefore, since the~$r_i$'s are rows in a generator matrix of~$C$, it follow that~$\ell_j=-\sum_{i=1}^k a_{i,j}r_i$ is a codeword of~$C$ for every~$j\in[t]$.

To show that a set~$I\in\binom{[n]}{\le \mu_t(C)}$ exists, i.e., that $\{\ell_j\}_{j=1}^t$ are a~$t$-covering of~$\{v_j\}_{j=1}^t$ within the bound, first observe that since~$u-u_0=-\sum_{i=1}^k a_i r_i$, it follows that~$u-u_0\in C_t$. Hence, we have found a codeword in~$C_t$ which covers~$u_0$, and since~$-\sum_{i=1}^k a_ir_i=\sum_{i=1}^t b_i\ell_i$ by definition, the respective Hamming distance is
\begin{align*}
    d_H(u_0,-\textstyle\sum_{i=1}^k a_ir_i)&=d_H(u_0,\textstyle\sum_{i=1}^tb_i\ell_i)\\
    &=w_H(\textstyle\sum_{i=1}^t b_iv_i - \textstyle\sum_{i=1}^t b_i \ell_i ) \\
    &= w_H (\textstyle\sum_{i=1}^t b_i(v_i - \ell_i)) \\
    &= \left|\textstyle\bigcup_{i=1}^t \operatorname{supp}(v_i - \ell_i)\right|\le \mu_t(C),
\end{align*}
where the last step follows by identical arguments as in the proof of Theorem~\ref{ghwcoveringbound}. Hence, the set~$I$ on which~$u_0$ and~$\sum_{i=1}^t b_i\ell_i$ differ belongs to $\binom{[n]}{\le \mu_t(C)}$, which concludes the proof.
\end{proof}

\subsection{Complexity analysis}



\begin{theorem} \label{runtime}
Algorithm~\ref{alg:cap} runs in $O(ntk\log(q)^2)$ time. 
\end{theorem}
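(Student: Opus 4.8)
The plan is to track the cost of each operation in Algorithm~\ref{alg:cap} under a fixed representation of~$\mathbb{F}_{q^t}$ as a $t$-dimensional vector space over~$\mathbb{F}_q$, so that a single element of~$\mathbb{F}_{q^t}$ is stored in~$O(t\log q)$ bits and arithmetic over~$\mathbb{F}_{q^t}$ costs a fixed polynomial in~$t\log q$. First I would account for the preprocessing step that builds $u=\sum_{i=1}^t b_i v_i$: this is a single length-$n$ vector over~$\mathbb{F}_{q^t}$, assembled coordinatewise from the~$t$ input vectors, costing~$O(nt\log q)$ bit operations (or $O(nt)$ field operations over~$\mathbb{F}_q$). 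The symmetric final step, decomposing $u-u_0$ back into the basis to read off~$\ell_1,\ldots,\ell_t$, has the same cost.

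Next I would analyze the main loop, which runs $k$ times. The dominant work inside iteration~$i$ is the maximization in line~6 over the~$d_i-d_{i-1}$ coordinates of the current block. The key observation is that $u_j - a_i r_{i,j}=0$ forces $a_i = u_j / r_{i,j}$, so each of the relevant coordinates (where $r_{i,j}\neq 0$, which by the chained structure is exactly the new support block) proposes exactly one candidate value of~$a_i$ in~$\mathbb{F}_{q^t}$; the optimal~$a_i$ is simply the most frequently proposed value. Computing each candidate is one division in~$\mathbb{F}_{q^t}$, and selecting the mode of a list of $d_i-d_{i-1}$ field elements can be done in time near-linear in the list length times the cost of comparing/hashing a field element. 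Summing the block sizes telescopes: $\sum_{i=1}^k (d_i-d_{i-1}) \le n$, so across all iterations the total number of candidate-generating divisions is~$O(n)$ field operations, plus the update $u \gets u - a_i r_i$ which touches~$O(n)$ coordinates per row but, restricted to the nonzero support, again totals~$O(n)$ field operations over the whole loop (or can be bounded crudely by~$O(nk)$ if one charges the full row each time).

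To convert field operations to bit operations I would invoke the standard cost of arithmetic in~$\mathbb{F}_{q^t}$: each addition costs~$O(t\log q)$ and each multiplication or division costs~$O((t\log q)^2)$ (schoolbook polynomial arithmetic modulo a degree-$t$ irreducible over~$\mathbb{F}_q$, with~$\mathbb{F}_q$ arithmetic itself costing~$O(\log q)^2$ per multiply). Multiplying the~$O(nk)$ field operations of the loop by the~$O((t\log q)^2)$ cost per operation gives the claimed~$O(ntk\log(q)^2)$ bound, which dominates the~$O(nt\log q)$ preprocessing and postprocessing. I would note that the dominant term comes from charging each of the~$k$ row updates its full width~$n$ at multiplication cost; the candidate-selection and mode-finding steps are lower order under reasonable comparison models and do not change the final bound.

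The main obstacle, and the place I would be most careful, is pinning down a clean and defensible computational model: specifically, stating exactly how~$\mathbb{F}_{q^t}$ elements are represented, which arithmetic cost assumptions are used (schoolbook versus fast multiplication changes the exponents), and how the mode-finding in line~6 is charged. A subtle point is that the bound~$O(ntk\log(q)^2)$ implicitly treats the row updates as the bottleneck at~$O(nk)$ multiplications; one must confirm that finding the optimal~$a_i$ does not secretly cost more than this, which it does not because each coordinate yields a single candidate and the total candidate count is~$O(n)$. I would therefore make the representation and the per-operation cost explicit at the outset, then present the operation count and the multiplication step cost as the two factors whose product is the stated runtime.
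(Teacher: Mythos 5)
Your outline matches the paper's (same observation that each coordinate proposes the single candidate $a_i = u_j r_{i,j}^{-1}$, same telescoping of block sizes, same identification of the $k$ row updates as the dominant cost), but there is a genuine gap in the cost accounting: the final multiplication does not give the claimed bound. You charge every multiplication/division as a \emph{generic} $\mathbb{F}_{q^t}$ operation at cost $O\bigl((t\log q)^2\bigr)=O\bigl(t^2\log(q)^2\bigr)$, and then assert that $O(nk)$ such operations yield $O\bigl(ntk\log(q)^2\bigr)$. In fact $O(nk)\cdot O\bigl(t^2\log(q)^2\bigr)=O\bigl(nkt^2\log(q)^2\bigr)$, which is a factor of $t$ larger than the statement; the same issue makes your candidate-generation total $O\bigl(nt^2\log(q)^2\bigr)$, which is also not within the claimed bound unless $t\le k$.

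The missing idea, which is exactly what the paper's proof uses, is that \emph{no generic $\mathbb{F}_{q^t}$ multiplication ever occurs in the algorithm}: every product pairs an element of $\mathbb{F}_{q^t}$ (a coordinate of $u$, or the coefficient $a_i$) with an entry $r_{i,j}$ of the chained generator matrix $\Gamma$, which lies in the base field $\mathbb{F}_q$. In the basis representation $a=\sum_{l=1}^t a_l b_l$ with $a_l\in\mathbb{F}_q$, multiplying $a$ by a scalar $r_{i,j}\in\mathbb{F}_q$ is just $t$ independent $\mathbb{F}_q$-multiplications, i.e.\ $O\bigl(t\log(q)^2\bigr)$ bit operations (and likewise $u_j r_{i,j}^{-1}$ costs one $\mathbb{F}_q$-inversion by extended Euclid, $O(\log(q)^2)$, plus one such scalar multiplication). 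With this per-operation cost of $O\bigl(t\log(q)^2\bigr)$ rather than $O\bigl(t^2\log(q)^2\bigr)$, your own operation counts go through verbatim: the candidate step totals $O\bigl(nt\log(q)^2\bigr)$ by telescoping, and the $k$ full-width row updates total $O\bigl(ntk\log(q)^2\bigr)$, which is the theorem. So the structure of your argument is sound; the defect is that your stated arithmetic model is too coarse to prove the bound, and repairing it requires observing the mixed $\mathbb{F}_{q^t}\times\mathbb{F}_q$ nature of all products, not merely "pinning down a clean computational model" as you suggest.
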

\begin{proof}
To find the element~$a\in\mathbb{F}_{q^t}$ in the for-loop, recall that~$\{ r_{i,j} \}_{j\in\{d_{i-1}+1,\ldots,d_i\}}$ are nonzero for every~$i\in[k]$, and compute
$\{\{u_jr_{i,j}^{-1}\}\}_{j \in \{d_{i-1}+1,\ldots,d_i\}}$. Let~$a$ be the most frequently occurring value in this multi-set; it is readily verified that the required maximum is obtained by this value.

Computing $r_{i,j}^{-1}$ takes $O(\log(q)^2)$ time by the extended Euclidean algorithm, and multiplication of the $u_jr_{i,j}^{-1}$ takes $O(t\log(q)^2)$ time. This leads to complexity $O(t\log(q)^2\cdot(d_{i} - d_{i-1}))$ for each~$i$. Summation over all~$i$'s yields~$O(\log(q)^2 \cdot n)$ due to the telescopic sum.

Furthermore, computing~$u$ and $u - a\cdot r_i$ for each~$i$ takes $n(t\log(q)^2+t\log(q))$ time, and this computation is done~$k$ times, taking $O(n t k \log(q)^2)$ time in total. Thus, in total the algorithm runs in $O(ntk\log(q)^2)$ time. 
\end{proof}


\section{Application to Reed Muller codes}

In this section we specialize our techniques for Reed-Muller codes due to their importance and their interesting covering properties~\cite{elimelech2022rmgeneralized}. The challenge in applying our techniques for any code~$C$ is finding the generator matrix~$\Gamma(C)$. This allows us to apply our algorithm to any Reed-Muller code, as opposed to just binary ones. First, we show how to find a chained matrix for $q$-ary RM codes, which allows us to use our algorithm. Then, for binary RM codes, we devise an extension of~\cite{elimelech2022rmgeneralized} using our algorithm, which improves upon~\cite{elimelech2022rmgeneralized} exponentially.
    \subsection{Chained generator matrices for Reed-Muller codes}\label{section:chainedMatrices}
    For~$RM_2(r,m)$, there exists a chained generator matrix whose~$j$'th row is the evaluation of the the~$j$'th monomial of degree~$r$ or less, according to  \textit{anti-lexicographic} order; the full details are given in~\cite[Thm.~7]{hammingweights}. For the general~$q$-ary case, we provide the following generalization of the~$(u,u+v)$ construction. 
    
    \begin{theorem}\label{theorem:RMgeneratormatrixq}
    Let $RM_q(r,m)$ be a $q$-ary RM code with generator matrix $G_{r,m}$. Then letting $w =\min(r,q-1)$ and $\bF_q=\{0,\gamma^0,\gamma^1,\ldots,\gamma^{q-2}\}$, we have that $G_{r,m} =$ 
    \begin{align*}
        \begin{bmatrix} (\gamma^{q-2})^wG(r-w,m-1) & ... & {0}^wG(r-w,m-1) \\ \vdots & \vdots & \vdots \\ (\gamma^{q-2})^{0}G(r,m-1) & ... & 0^0G(r,m-1)\end{bmatrix}.
    \end{align*}
    \end{theorem}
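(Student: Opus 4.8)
The plan is to prove that the block matrix displayed in Theorem~\ref{theorem:RMgeneratormatrixq} is indeed a generator matrix of $RM_q(r,m)$. First I would set up the evaluation-point bookkeeping: identify $\bF_q^m$ with $\bF_q \times \bF_q^{m-1}$ by writing each point as $(x_1,\boldx')$ with $x_1\in\bF_q$ and $\boldx'\in\bF_q^{m-1}$, and order the $q^m$ coordinates so that the first coordinate runs through the fixed enumeration $\bF_q=\{0,\gamma^0,\ldots,\gamma^{q-2}\}$ in the outer (block-column) index while $\boldx'$ runs inside each block. Under this coordinate ordering, the $j$-th block column of the claimed matrix corresponds to fixing $x_1$ to the $j$-th field element, and each block is a scalar multiple $(x_1)^{\ell}$ of a generator matrix $G(r-\ell,m-1)$ of $RM_q(r-\ell,m-1)$ on the $\boldx'$-coordinates.

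The core of the argument is a dimension-and-spanning claim. Any $f\in\bF_q[x_1,\ldots,x_m]$ with $\deg f\le r$ and each variable of degree $<q$ can be written uniquely as $f=\sum_{\ell=0}^{w} x_1^{\ell}\,g_\ell(\boldx')$, where $w=\min(r,q-1)$ bounds the admissible powers of $x_1$ (degree $<q$ and total degree $\le r$), and $g_\ell\in\bF_q[x_2,\ldots,x_m]$ satisfies $\deg g_\ell\le r-\ell$ with each variable of degree $<q$; thus $g_\ell$ evaluates to a codeword of $RM_q(r-\ell,m-1)$. I would show that evaluating $f$ at all points, read off in the chosen coordinate order, produces exactly a row of the claimed matrix's row space: the block in column $j$ (first coordinate $=\beta_j$) contributes $\sum_\ell \beta_j^{\ell}\,(\text{evaluation of }g_\ell)$, which matches multiplying the row vector of $g_\ell$-coefficients by the block column $(\beta_j^\ell G(r-\ell,m-1))_\ell$. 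Conversely every row of the claimed matrix arises this way, so the row space equals $RM_q(r,m)$.

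To finish I would verify the dimension count to rule out any degeneracy: the number of rows is $\sum_{\ell=0}^{w}\rho_q(r-\ell,m-1)$ (where $\rho_q(\cdot,\cdot)$ is the dimension of the relevant $q$-ary RM code), and the decomposition $f=\sum_\ell x_1^\ell g_\ell$ being a direct sum over $\ell$ shows this equals $\dim RM_q(r,m)$, so the stated rows are linearly independent and span. The main obstacle I anticipate is getting the degree bookkeeping exactly right at the boundary: one must confirm both that $w=\min(r,q-1)$ is the correct upper limit on $\ell$ (the truncation $x_1^q=x_1$ never forces a higher power to reappear once $r<q-1$, and when $r\ge q-1$ the bound $q-1$ takes over) and that for each $\ell$ the residual degree constraint is precisely $\deg g_\ell\le r-\ell$, so that the $\ell$-th block row block is genuinely built from $RM_q(r-\ell,m-1)$ and not some larger or smaller code. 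Once this $x_1$-adic decomposition is justified as a bijection respecting the degree filtration, the matrix identity is immediate from linearity of evaluation.
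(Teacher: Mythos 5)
Your proposal is correct and takes essentially the same route as the paper, which justifies the theorem by viewing the polynomial ring as univariate in one distinguished variable (the paper uses $x_m$, you use $x_1$ --- an immaterial relabeling), decomposing $f=\sum_{\ell=0}^{w} x_m^{\ell}\,g_\ell$ with $\deg g_\ell \le r-\ell$, and ordering block rows by that power and block columns by the field element the variable evaluates to. Your explicit dimension count $\sum_{\ell=0}^{w}\rho_q(r-\ell,m-1)=\dim RM_q(r,m)$ simply fills in a detail the paper leaves implicit.
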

    
    Intuitively, Theorem~\ref{theorem:RMgeneratormatrixq} holds by considering the multivariate polynomial ring in which the Reed-Muller code exists as a univariate polynomial ring over the last variable $x_m$. Then, using a standard basis consisting of all monomials of total degree~$r$ or less, order the rows based on the degree of $x_m$, and order the columns based on the field element that~$x_m$ evaluates to.
    

    When viewing the generator matrix in this block form, we see that because $G(r-i,m-1)$ is a subcode of $G(r,m-1)$, we can ``subtract'' lower block rows (in the block form) from upper block rows (this corresponds to subtracting matching monomials from each other, after evaluating them at $x_m$). However, we cannot use upper block rows to row-reduce lower block rows. This leads to the following theorem. 

    \begin{theorem}\label{theorem:rowReduction}
    There exists a row reduction (in block form) after which $G_{r,m}$ becomes lower triangular in the block form presented in Theorem~\ref{theorem:RMgeneratormatrixq}. 
    \end{theorem}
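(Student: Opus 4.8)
The plan is to track the effect of the permitted block row operations not on the full rows but on the \emph{scalar coefficients} multiplying the common sub‑blocks $G(r-d,m-1)$. Index the block rows of $G_{r,m}$ by the degree $d$ of $x_m$, so that the degree-$d$ block row is the $(w-d+1)$-th from the top and, across the $q$ block columns indexed by the field elements $\beta\in\bF_q$ (in the fixed order $\gamma^{q-2},\ldots,\gamma^0,0$), it equals $\beta^d\,G(r-d,m-1)$. Since $RM_q(r-d,m-1)\subseteq RM_q(r-d',m-1)$ whenever $d>d'$, and monomials are arranged by degree, $G(r-d,m-1)$ is obtained from $G(r-d',m-1)$ by deleting the rows of the monomials of degree exceeding $r-d$; this is exactly the ``matching monomials'' structure described before the theorem.

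First I would record the key invariant: subtracting an $\bF_q$-multiple of a lower block row (degree $d'$) from an upper one (degree $d>d'$), restricted to the monomials common to both, sends the degree-$d$ block row from $f(\beta)\,G(r-d,m-1)$ to $\bigl(f(\beta)-c\,f'(\beta)\bigr)\,G(r-d,m-1)$, where $f,f'$ are the scalar polynomials currently attached to rows $d,d'$ (here one uses that the restriction of $G(r-d',m-1)$ to the shared monomials is precisely $G(r-d,m-1)$). Thus every reachable state of the degree-$d$ block row has the form $f_d(\beta)\,G(r-d,m-1)$. I would then prove by induction on $d$ that $f_d$ can be made \emph{any} monic polynomial of degree $d$: its initial value is $\beta^d$, and since the lower rows $0,\ldots,d-1$ realize polynomials of degrees $0,\ldots,d-1$, forming an $\bF_q$-basis of the polynomials of degree $<d$, adding suitable multiples of them to $\beta^d$ yields any prescribed monic degree-$d$ polynomial.

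Next I would choose the targets. Writing $\beta_1^\ast,\ldots,\beta_q^\ast$ for the field elements in the column order and $S_d=\{\beta_{q-d+1}^\ast,\ldots,\beta_q^\ast\}$ for the rightmost $d$ of them, set $f_d(\beta)=\prod_{\beta'\in S_d}(\beta-\beta')$, a monic polynomial of degree $d$; this is legitimate because $w+1\le q$ (as $w=\min(r,q-1)$), so each $S_d$ is a genuine set of distinct elements. These targets are nested, $f_d=f_{d-1}\cdot(\beta-\beta_{q-d+1}^\ast)$, matching the inductive construction. After transforming each block row of degree $d$ into $f_d(\beta)\,G(r-d,m-1)$, its block in column $c$ equals $f_d(\beta_c^\ast)\,G(r-d,m-1)$, which vanishes exactly when $\beta_c^\ast\in S_d$, i.e.\ on the last $d$ columns. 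Hence the degree-$d$ block row is supported on the leftmost $q-d$ block columns, and as $d$ runs from $w$ (top) down to $0$ (bottom) the nonzero prefix grows from $q-w$ to $q$. This is precisely the block lower-triangular (staircase) form, which concludes the proof.

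I expect the main obstacle to be the bookkeeping in the invariant of the second paragraph: one must verify carefully that restricting the subtracted lower block row to the monomials shared with the upper row leaves each block equal to a scalar multiple of the \emph{same} matrix $G(r-d,m-1)$, rather than some other submatrix, which is exactly where the subcode containment and the degree-respecting monomial ordering are used. Once this is established, the remaining steps—polynomial interpolation, root counting, and the nesting of the $S_d$—are routine.
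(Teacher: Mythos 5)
Your proof is correct, and at its core it performs the same block row reduction as the paper: both exploit the fact that a lower block row, restricted to the monomials it shares with an upper one, is exactly $G(r-d,m-1)$, so subtracting it merely rescales the upper block. The difference is organizational. The paper eliminates greedily, column by column: in block column $i$ it uses block row $i$ to kill all blocks above it, noting that the already-cleared columns to its right are unaffected because block row $i$ vanishes there. You instead prove a reachability invariant---every state of the degree-$d$ block row is $f_d(\beta)\,G(r-d,m-1)$ with $f_d$ monic of degree $d$, and every monic degree-$d$ polynomial is reachable---and then choose interpolation targets $f_d=\prod_{\beta'\in S_d}(\beta-\beta')$ vanishing on the last $d$ columns. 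The two processes terminate at the identical matrix, since the paper's greedy elimination forces exactly your $f_d$ (a monic degree-$d$ polynomial vanishing at $d$ prescribed points is unique). Your formulation buys some extra rigor: the paper's proof refers to the block $(\gamma^{i-1})^{i}G(r-i,m-1)$ when clearing column $i$, although for $i\ge 2$ that block row has already been modified by earlier steps; your invariant makes precise that what survives there is a nonzero scalar multiple of $G(r-i,m-1)$, which is all the elimination actually needs.
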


    \ifextended
    \begin{proof}
        We present such row-reduction as follows, where the block-columns are numbered from~$0$ (rightmost) to~$q-1$ (leftmost). The $0$'th block-column is already in lower-triangular form, since~$0^0G(r,m-1)$ (the bottom block) is the only nonzero block in that block-column. Next, in the~$1$'st block-column we use multiples of~$(\gamma^0)^1 G(r-1,m-1)$ to zero-out all the blocks $(\gamma^0)^2 G(r-2,m-1),\ldots,(\gamma^0)^w G(r-w,m-1)$ above it; this is possible due to the subcode property mentioned above. In general, in block-column~$i\ge 1$ we use multiples of $(\gamma^{i-1})^{i} G(r-i,m-1)$ to zero-out all the blocks $(\gamma^{i-1})^{i+1} G(r-i-1,m-1),\ldots,(\gamma^{i-1})^w G(r-w,m-1)$ above it; this does not spoil the zero blocks to the right of block-column~$i$ since they are already zero.
    \end{proof}

    \else 
    \begin{proof}
    Given in (CITE ARXIV)
    \end{proof}
    \fi
    
    Given this lower triangular form, we have that a chained matrix for~$RM_q(r,m)$ can be obtained recursively from RM codes of lower order. While seeing that it is a generator matrix of~$RM_q(r,m)$ is rather straightforward, showing that it indeed realizes the GHWs of~$RM_q(r,m)$ requires a dive into their definition from~\cite{qaryRM}.

    \begin{theorem}\label{theorem:RMqChainedMatrix}
    Consider an $RM_q(r,m)$ with generator matrix $G_{r,m}$. Then 
    $$\Gamma(G_{r,m}) =
    \begin{bmatrix} 
    \Gamma(G(r-w,m-1)) & ... & 0\\
    \vdots & \ddots & \vdots \\
    \Gamma(G(r,m-1)) & ... & \Gamma(G(r,m-1))
    \end{bmatrix}.$$
    \end{theorem}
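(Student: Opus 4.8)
The plan is to argue by induction on the number of variables~$m$, with inductive hypothesis that $\Gamma(G(r',m-1))$ is a chained generator matrix of $RM_q(r',m-1)$ for every admissible degree~$r'$; the base case $m=0$ is the length-one code generated by $[1]$. Two things must be shown: that the displayed block matrix is a generator matrix of $RM_q(r,m)$, and that it realizes the GHW hierarchy (Definition~\ref{chaincondition}), i.e. that the supports of its row prefixes have the correct minimal sizes and the trailing-zero pattern of Definition~\ref{matrixchaincondition}. The natural starting point is the lower-triangular block form guaranteed by Theorem~\ref{theorem:rowReduction}, whose diagonal (pivot) blocks are scalar multiples of the generator matrices $G(r-j,m-1)$, $j=0,\dots,w$.

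For the generator-matrix claim (the part the surrounding text calls straightforward), I would replace each diagonal pivot $G(r-j,m-1)$ by the chained matrix $\Gamma(G(r-j,m-1))$. By the inductive hypothesis these have the same row space, so the total row span is unchanged and equals $RM_q(r,m)$, while the block-triangular shape with full-rank diagonal blocks forces full rank. The dimension bookkeeping follows from the recursion $\dim RM_q(r,m)=\sum_{j=0}^{w}\dim RM_q(r-j,m-1)$, obtained by writing a polynomial of degree $\le r$ as $\sum_{d=0}^{q-1}x_m^{\,d}p_d(x_1,\dots,x_{m-1})$ with $\deg p_d\le r-d$; this matches the number of rows contributed by the diagonal blocks.

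The real work is the GHW realization. Since the prefix subcode $D_t$ spanned by the first $t$ rows is $t$-dimensional, Definition~\ref{def:dr} gives $d_t\le|\operatorname{supp}(D_t)|$ for free, so it suffices to prove the reverse inequality, i.e. that each prefix attains the GHW. I would compute the prefix support sizes directly from the block-triangular structure, ordering the rows block-by-block and, within each block, in the chained order supplied by induction. In the binary case the construction is exactly the $(u,u+v)$ recursion: the cumulative support inside the lower diagonal block is $2^{m-1}+d_s(RM_2(r,m-1))$, and the GHW sequence of $RM_2(r,m)$ comes out as the sorted merge of the sequence of $RM_2(r-1,m-1)$ with that of $RM_2(r,m-1)$ shifted by $2^{m-1}$, which one checks termwise against Theorem~\ref{Reed-Muller GHWs}. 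I would also verify the trailing-zero count of Definition~\ref{matrixchaincondition}: the zeros from the block columns to the right of the diagonal, together with the trailing zeros inside the diagonal chained block, sum to exactly $n-d_t$, so beyond the arrangement of Remark~\ref{remark:permuteEntries} no further column permutation is needed.

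The main obstacle is the $q$-ary case of this last matching. Unlike the binary $(u,u+v)$ picture, the degree-$d$ block row carries the scalings $c^{d}$ as $x_m$ ranges over $\bF_q$, and the column $x_m=0$ is special because $0^{d}=0$ for $d\ge1$; consequently a block row does not simply fill whole block columns, and the support increments are governed by the intricate formula of~\cite{qaryRM}. The crux is therefore to show that the prefix-support sequence produced by the construction coincides, as an ordered sequence, with the GHW hierarchy computed from the canonical-representation formula of~\cite{qaryRM}. This is exactly where the ``dive into the definition'' becomes unavoidable, and I expect the bulk of the argument to be the combinatorial verification that the recursive merge of the lower-level GHW sequences reproduces that formula.
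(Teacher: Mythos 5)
Your overall scaffolding matches the paper's: induction using the lower-triangular block form of Theorem~\ref{theorem:rowReduction}, the observation that a block-triangular matrix with chained diagonal blocks has the right shape and row space, and then a verification that the prefix supports realize the GHW hierarchy. The generator-matrix and full-rank bookkeeping you describe is fine, and your binary-case merge argument against Theorem~\ref{Reed-Muller GHWs} is plausible. But the proposal has a genuine gap at exactly the step you yourself flag as the crux: the $q$-ary GHW realization is never actually proved. You correctly note that $d_t\le|\operatorname{supp}(D_t)|$ is automatic from Definition~\ref{def:dr} and that the reverse matching is the hard direction, but you then only state that you ``expect'' a combinatorial verification against the canonical-representation formula of~\cite{qaryRM} to work out. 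Since that matching \emph{is} the theorem (everything else is routine), leaving it as an expectation means the proof is not complete; moreover, a direct termwise comparison with the closed-form $q$-ary GHW formula is precisely the ``intricate'' route the paper's authors avoid.

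The paper closes this gap differently: rather than matching prefix supports against the closed-form formula, it invokes the \emph{recursive algorithm} for computing the $q$-ary GHWs given in~\cite[Remark~6.9]{qaryRM}, which operates on tuples $(r,u,v,m)$, and shows that its four cases correspond exactly to a walk down the block-diagonal structure of $\Gamma(G_{r,m})$: the case $v=0$ is moving to the next variable (recursing on $u,m-1$); the case $v>u$ caps the degree of the current variable; the case $r>\rho(u-v,m-1)$ accounts for the length of an entire diagonal block $G_{u-v,m-1}$ and moves to the next block row; and the case $r<\rho(u-v,m-1)$ recurses into the diagonal block containing row $r$. Under this identification, the GHW computation of~\cite{qaryRM} literally computes the support of the first $r$ rows of the proposed matrix, so the realization follows with no new combinatorics. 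If you want to salvage your route, you would need to either carry out the full merge-versus-formula verification in the $q$-ary setting (handling the $x_m=0$ column and the scalings $c^d$ that you rightly identify as the obstruction), or, more economically, adopt the paper's device of matching your recursion to the one already proven correct in~\cite{qaryRM}.
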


    \ifextended
    \begin{proof}

    
     We can see that this is a generator matrix directly from Theorem~\ref{theorem:RMgeneratormatrixq}, and will prove that it has the correct form using induction. Our base cases are when $r = 0$ or when $r = (q-1)m$. When $r = 0$, the generator matrix is just a row of all $1$'s, and when $r = (q-1)m$, then we have the identity matrix, both of which are chained. In the inductive case, it is obvious that if the matrix is lower triangular, and the matrices on the diagonal blocks are chained (for the respective codes spanned by them), then the overall matrix is of the correct form. 

     It remains to be seen, however, if the rows of this matrix realize the correct generalized hamming weights. In \cite{qaryRM}, the generalized Hamming weights of $q$-ary Reed-Muller codes are established. Furthermore, they give a recursive algorithm which can be used to compute the GHWs. We see that if our matrix in fact realizes these weights, then it would also induce a recursive way of computing the GHWs (by computing the GHWs of each of the diagonal blocks). In fact, doing so follows the exact same recursive algorithm that is given in \cite[Remark~6.9]{qaryRM}. We will go through the four parts of the algorithm to show this. The algorithm runs on a $4$-tuple $(r,u,v,m)$. We will view the algorithm as finding the $r$-th row of $\Gamma(G_{r,m})$ by recursively adding up the lengths of the blocks. In this case $u,m$ are the code parameters of the current block, $v$ is the block row of the matrix (also the degree of the variable we are considering), and $r$ is the row whose length we are trying to find. The four steps enumerated in the paper are the following: 

    i) $v = 0$: In this case, we have made it to the last row of the matrix, so we must move to the next variable and make a recursive call on $u,m-1$. 

    ii) $v>u$: degree of variable we are considering is higher than max degree of polynomials in the code, therefore since there are no variables of degree higher than $u$ we set $v := u$. 

    iii) $r > \rho(u-v,m-1)$: then the number of rows in the block row we are considering is greater than $r$, so we can compute the length of $G_{u-v,m-1}$ (the element on the diagonal of the row we are considering), and move on to the next row, adjusting $r$ by the number of rows we just computed. The tuple that is offshooted in this step is simply the length of $G_{u-v,m-1}$. At the end of the algorithm all of these lengths are added up, equivalent to adding up the lengths of the blocks on the block diagonal of $\Gamma(G_{r,m})$.

    iiii) $r <\rho(u-v,m-1)$: Then the row we need is within the block row we are consdering, so we must recurse on the block element within the row we are considering to get to the correct row. 

    We see that these four steps, equivalent to the ones described in the paper, describe finding the support of the first $r$ rows of $\Gamma(G_{u,m})$. Therefore $\Gamma(G_{u,m} )$ realizes the GHW hierarchy of $RM_q(r,m)$. 
    \end{proof}

    \else
    \begin{proof}
     Given in (CITE ARXIV)
     \end{proof}
    \fi 
    \subsection{A modification to Elimelech's covering algorithm}
    Elimelech's algorithm only applies to the binary $RM_2(r,m)$, and relies on its recursive~$(u,u+v)$ construction. Roughly speaking, it receives as input a~$t\times n$ matrix of~$t$ vectors to cover, splits it in half lengthwise, and calls a routine named RECURSIVE to cover each of the halves individually. Additionally, it attempts to employ the subadditive property mentioned in Theorem~\ref{subadditive} using a routine called SUBADDITIVE, which calls RECURSIVE with each row of the input matrix, and returns the minimum of the two. 
    In the base case it requires a brute force computation of the minimum distance between all codewords of $RM_2(1,m)$ and a fixed vector~$v$, which leads the complexity to be quadratic in the length~$n=2^m$ and exponential in~$t$.
    
    While Algorithm~\ref{alg:cap} can be applied directly on any~$RM_2(r,m)$, a better alternative exists. Specifically, we follow the recursive structure of Elimelech's algorithm and only replace the base case by Algorithm~\ref{alg:cap} (for which a chained matrix is easily computable). This modification reduces the complexity to be linear in both~$n$ and~$t$ but comes at a cost---the output codewords will cover the input vectors only up to the bound in Theorem~\ref{theorem:bound}. 
    Hence, in \ifextended Appendix~\ref{section:experimental} \else (CITE ARXIV) \fi we numerically compare known bounds from~\cite{elimelech2022rmgeneralized} against Theorem~\ref{theorem:bound}, and show improvement in many cases.
    Let COVER be the algorithm from~\cite[Alg.~1]{elimelech2022rmgeneralized}, for which we have the following.
    \begin{theorem}\label{ElimelechRuntime}
    \cite[Thm.~25]{elimelech2022rmgeneralized} For any $t,r,m \in \mathbb{N}$, $\mbox{COVER}(v,r)$ has complexity $$O(t2^t2^{(t+1)(\log(n)+1)}(2^{t+1}-1)^{-r} + tn\log(n))).$$
    \end{theorem}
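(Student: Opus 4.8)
The plan is to reconstruct the runtime of COVER from the recurrence induced by the $(u,u+v)$ construction of $RM_2(r,m)$. Writing $T(r,m)$ for the cost of $t$-covering a target against $RM_2(r,m)$ with $n=2^m$ (the fixed $t$ entering only through the alphabet $\bF_{2^t}$ of $C_t$, via Definition~\ref{def:generalizedradius4}), the recursive step splits the length-$n$ target into two halves of length $2^{m-1}$, covers the first with $RM_2(r,m-1)$ and, after the appropriate adjustment, the second with $RM_2(r-1,m-1)$. This yields
\begin{equation*}
T(r,m) = T(r,m-1) + T(r-1,m-1) + O(\text{overhead}),
\end{equation*}
with the recursion bottoming out at $r=1$, where the algorithm brute-forces the minimum distance over all of $RM_2(1,m')$.

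First I would pin down the base-case cost. The code $RM_2(1,m')$ has dimension $\rho(1,m')=m'+1$, so over $\bF_{2^t}$ it has $(2^t)^{m'+1}=2^t(n')^{t}$ codewords, where $n'=2^{m'}$. Scanning each one against the combined target $u_0=\sum_i b_i v_i$ costs $O(t\,n')$ field operations, so a single leaf of length $n'$ costs $O\bigl(t\,2^t (n')^{t+1}\bigr)$; this is the factor that is quadratic in the length and exponential in $t$ for $t=1$.

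Next I would count the leaves. Along any root-to-leaf path from $(r,m)$ each step decrements $m$, and reaching $r=1$ requires exactly $r-1$ ``$v$-branches'', the last of which terminates the recursion; hence a leaf at depth $d$ (so $n'=2^{m-d}$) is reached by exactly $\binom{d-1}{r-2}$ paths. Summing the per-leaf cost gives
\begin{equation*}
O\!\left(t\,2^t n^{t+1}\sum_{d}\binom{d-1}{r-2} 2^{-d(t+1)}\right),
\end{equation*}
and substituting $x=2^{-(t+1)}$ into the binomial generating function $\sum_{j}\binom{j}{r-2}x^{j}=x^{r-2}/(1-x)^{r-1}$ collapses the sum to $\bigl(x/(1-x)\bigr)^{r-1}=(2^{t+1}-1)^{-(r-1)}$. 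Using $2^{(t+1)(\log n+1)}=2^{t+1}n^{t+1}$ and $2^{t+1}-1=\Theta(2^{t+1})$, this reproduces exactly the first term $t\,2^t\,2^{(t+1)(\log n+1)}(2^{t+1}-1)^{-r}$ of the stated bound.

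Finally I would account for the additive $O(tn\log n)$ term. This is the non-base-case bookkeeping: the splits, length-$2^{m'}$ vector additions, and recombination at each internal node, together with the SUBADDITIVE pass that runs the ordinary covering of each of the $t$ rows separately (Theorem~\ref{subadditive}); across the $m=\log n$ recursion levels and $t$ rows this totals $O(tn\log n)$, and since COVER returns the better of the two strategies both costs are incurred and summed. The main obstacle is the exact leaf-counting together with the closed-form evaluation of the binomial--geometric sum: in particular, I would need to verify carefully that termination precisely at $r=1$ produces the multiplicities $\binom{d-1}{r-2}$, so that the collapsed sum gives the power $(2^{t+1}-1)^{-(r-1)}$ and hence, after the $\Theta(2^{t+1})$ absorption, the precise exponent $(2^{t+1}-1)^{-r}$ rather than an off-by-one power of $(2^{t+1}-1)$.
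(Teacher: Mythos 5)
Your attempt cannot be compared against ``the paper's own proof'' for a simple reason: this paper contains no proof of the statement. It is Theorem~25 of the cited work \cite{elimelech2022rmgeneralized}, imported verbatim as a black box; the only related analysis inside the paper is the runtime proof for the modified algorithm $\mbox{COVER}'$, which handles the same recurrence $T(t,r,m)=T(t,r,m-1)+T(t,r-1,m-1)+O(tn)$ by a direct induction --- a route that works there only because the replaced base case is cheap.

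Judged on its own merits, your reconstruction is sound and very likely retraces the original derivation. The branching $(r',m')\to(r',m'-1),(r'-1,m'-1)$ with termination at $r'=1$, the leaf cost $O\bigl(t2^t(2^{m-d})^{t+1}\bigr)$ for brute-forcing $RM_2(1,m-d)$ over $\bF_{2^t}$, and your multiplicity count $\binom{d-1}{r-2}$ are all correct (the last step of a root-to-leaf path must be the $(r-1)$-st degree decrement, else the recursion would have stopped earlier, and the remaining $r-2$ decrements sit anywhere among the first $d-1$ steps). The identity $\sum_{j}\binom{j}{r-2}x^{j}=x^{r-2}/(1-x)^{r-1}$ at $x=2^{-(t+1)}$ collapses the leaf sum to $t2^tn^{t+1}(2^{t+1}-1)^{-(r-1)}$, which equals the stated first term $t2^t2^{(t+1)(\log n+1)}(2^{t+1}-1)^{-r}$ times the factor $(2^{t+1}-1)/2^{t+1}\le 1$, hence lies within it; the $O(tn)$ per-level bookkeeping over $\log n$ levels gives the additive $O(tn\log n)$. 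That your calculation lands exactly on this idiosyncratic closed form is strong evidence you have recovered the intended argument. Two points to tighten: first, paths that hit the other base case ($r'=m'$, constant cost) only lower the total, which is what justifies extending your sum over all depths; second, your treatment of SUBADDITIVE is too quick --- its cost is $t$ runs of RECURSIVE at $t=1$, i.e.\ $O(tn^{2}3^{-(r-1)}+tn\log n)$, and absorbing $tn^{2}3^{-(r-1)}$ into the displayed bound needs a short separate argument (it is dominated by $tn\log n$ when $3^{r-1}\ge n/\log n$, and by the first term otherwise), not merely the remark that ``both costs are incurred and summed.''
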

    
    Let~$\mbox{COVER}'$ be the version of COVER with the modified base case, for which we have the following.
    
    \begin{theorem} 
    For any $t,r,m \in \mathbb{N}$, $\mbox{COVER}'(v,r)$ has complexity $O(tn \log(n))$.
    \end{theorem}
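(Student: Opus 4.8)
The plan is to compare $\mbox{COVER}'$ with $\mbox{COVER}$ term by term, using the fact that the two algorithms share an identical recursive skeleton and differ only in the subroutine invoked at the leaves. First I would split the runtime of $\mbox{COVER}$ from Theorem~\ref{ElimelechRuntime} into its two summands: the term $tn\log(n)$, which accounts for all the splitting, recombining, and SUBADDITIVE overhead across the whole recursion tree, and the exponential term $t2^t2^{(t+1)(\log(n)+1)}(2^{t+1}-1)^{-r}$, which accounts entirely for the brute-force base cases at the leaves. Since $\mbox{COVER}'$ only swaps the leaf subroutine, the overhead term $tn\log(n)$ is inherited unchanged, and it remains to re-bound the total leaf cost after brute force is replaced by Algorithm~\ref{alg:cap}.

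Next I would describe the recursion tree and count its leaves. By the $(u,u+v)$ structure, a call on $RM_2(r',m')$ spawns one call on $RM_2(r',m'-1)$ and one on $RM_2(r'-1,m'-1)$, so $m'$ drops by one at each step while $r'$ drops only along the second branch; the nontrivial leaves are the codes $RM_2(1,m')$. Writing $j=m-m'$, the number of leaves of length $2^{m'}=n/2^{j}$ is at most the number of length-$j$ paths that first reach $r'=1$ at depth $j$, namely $\binom{j-1}{r-2}$. By Theorem~\ref{runtime} with $q=2$, one invocation of Algorithm~\ref{alg:cap} on $RM_2(1,m')$, whose dimension is $m'+1$, costs $O(t\,m'\,2^{m'})$; the leaves reached through the SUBADDITIVE branch are single-vector instances handled identically with $t=1$, while the trivial leaves (full space or repetition code) cost only $O(tn')$ and sum, by an analogous negative-binomial identity, to $O(tn)$.

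The crux is then the summation over all leaves of the joint recursion,
\begin{align*}
\sum_{j}\binom{j-1}{r-2}\,O\!\left(t\,(m-j)\,2^{\,m-j}\right)
\;\le\; O(t\,m\,2^{m})\sum_{j}\binom{j-1}{r-2}2^{-j},
\end{align*}
which I would evaluate in closed form. Using $\sum_{i}\binom{i}{r-2}x^{i}=x^{r-2}/(1-x)^{r-1}$ at $x=\tfrac12$ gives $\sum_{j}\binom{j-1}{r-2}2^{-j}=1$, a constant independent of $r$ and $m$, so the joint-recursion leaves contribute $O(t\,m\,2^{m})=O(tn\log n)$; the SUBADDITIVE leaves contribute a further $t\cdot O(n\log n)=O(tn\log n)$ by the same sum with $t=1$. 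Adding these to the inherited overhead yields $O(tn\log n)$ overall.

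The main obstacle I anticipate is the bookkeeping behind the leaf count, in particular verifying that the SUBADDITIVE branch creates no leaves whose cost escapes the geometric factor $2^{-j}$. A reassuring sanity check, which I would carry out first, is to run the very same summation with the brute-force leaf cost $O\!\left(t\,2^{m'}2^{t(m'+1)}\right)$ in place of the Algorithm~\ref{alg:cap} cost: this reproduces exactly the exponential term of Theorem~\ref{ElimelechRuntime}, confirming that the recursion tree and the leaf count $\binom{j-1}{r-2}$ have been modeled correctly and that the improvement is precisely due to the polynomial base case.
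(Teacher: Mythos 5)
Your argument is correct, but it takes a genuinely different route from the paper. The paper proves the bound by a short induction on the recurrence for $\mbox{RECURSIVE}$: writing $T(t,r,m) = c'tn + T(t,r-1,m-1) + T(t,r,m-1)$, it checks the two base cases ($r=m$ costs a constant; $r=1$ invokes Algorithm~\ref{alg:cap} at cost $O(tnk)=O(tn\log n)$) and closes the induction with $c'tn + 2ct(n/2)\log(n/2) = O(tn\log n)$, finally dispatching SUBADDITIVE as $t$ calls with $t=1$. You instead unroll the recursion tree explicitly: you count the $\binom{j-1}{r-2}$ leaves at depth $j$, price each by Theorem~\ref{runtime}, and collapse the total via the negative-binomial identity $\sum_j \binom{j-1}{r-2}2^{-j}=1$. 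Your analysis yields more information---it localizes exactly where the exponential-to-polynomial improvement occurs, and your sanity check (re-running the sum with the brute-force leaf cost to recover Theorem~\ref{ElimelechRuntime}) is a nice validation of the leaf count. The trade-off is that your first step, attributing the two summands of Theorem~\ref{ElimelechRuntime} respectively to leaf work and to splitting/SUBADDITIVE overhead, is an assumption about the internal structure of the cited proof rather than a consequence of the theorem statement, so the claim that the $tn\log n$ overhead is ``inherited unchanged'' is not self-contained as written. This is a cosmetic rather than fatal gap: the internal work can be bounded directly (at depth $j$ there are at most $2^j$ nodes, each doing $O(tn/2^j)$ splitting work, giving $O(tn)$ per level over $\log n$ levels), which would make your argument fully independent of how Elimelech's bound decomposes. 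The paper's induction avoids this issue entirely and is shorter, at the cost of giving no insight into the shape of the recursion tree.
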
 
    
    \begin{proof} 
    This proof follows the same methodology as the proof of Theorem~\ref{ElimelechRuntime}. We analyze the complexity of $\mbox{RECURSIVE}(v,r)$, denoted by~$T(t,r,m)$, in an inductive manner.
    
    We have two base cases. When $r=m$ we have that $T(t,m,m) = c'$ for some constant~$c'$. When $r=1$ we apply Algorithm~\ref{alg:cap} in~$O(nkt)$ time with $k = \log(2^m) + 1 = \log(n) + 1$. Hence, both base cases run in~$O(tn\log(n))$ time.
    
    In our inductive step, we assume that the claim holds for~$T(t,r-1,m-1)$ and~$T(t,r,m-1)$, and prove that it holds for $T(t,r,m)$. 
    The algorithm splits a matrix of size $t\times n$ in half and then calls two recursive instances. Thus, for some constants~$c$ and~$c'$ we have that
    \begin{align*}
    T(t,r,m) &= c' tn + T(t,r-1,m-1) + T(t,r,m-1)  \\
    &\le c' tn + 2ct(n/2)\log(n/2)=O(tn\log(n)).
    \end{align*}
    This completes the proof of $\mbox{RECURSIVE}(v,r)$. To complete the proof of the overall algorithm, notice that $\mbox{SUBADDITIVE}$ is merely~$t$ consecutive calls to $\mbox{RECURSIVE}$ with $t = 1$, and hence it does not increase the overall complexity. 
    \end{proof} 
\newpage
\printbibliography
\clearpage

\ifextended
\appendices

\section{A covering algorithm}\label{section:1coveringalgorithm}
We demonstrate the algorithm which follows from the proof of Theorem~\ref{ghwcoveringbound}; it is also a special case (with~$t=1$) of Algorithm~\ref{alg:cap}. We choose the code~$C=RM(1,3)$, which is a chained code whose GHWs are~$(4,6,7,8)$ (Theorem~\ref{Reed-Muller GHWs}), and the value of the bound for it is~$3$.


Consider the following chained matrix (see Section~\ref{section:chainedMatrices}),
\begin{align*}
    \Gamma(C) =  \begin{bmatrix} 1 & 1 & 1 & 1 & 0 & 0 & 0 & 0 \\ 1 & 1 & 0 & 0 &1 & 1 & 0 &0 \\ 1 & 0 & 1 & 0 & 1 & 0 & 1 & 0 \\ 1 & 1 & 1 & 1 & 1 & 1 & 1 & 1  \end{bmatrix}  \triangleq
    \begin{bmatrix}
    g_1\\g_2\\g_3\\g_4
    \end{bmatrix}
\end{align*}
and suppose we wish to find the closest codeword to $v_0 = (v_1,\ldots,v_8)=(1,0,0,1,1,1,0,1)$. 
\begin{itemize}
\item Set~$v=v_0$.
\item \textit{Covering entry~$8$ using~$g_4$}: Since entry~$8$ of~$v$ and~$g_4$ coincide, we set~$v\leftarrow v - g_4 = (0,1,1,0,0,0,1,0)$.
\item \textit{Covering entry~$7$ using~$g_3$}: Since entry~$7$ of (the modified)~$v$ and~$g_3$ coincide, we set~$v\leftarrow v - g_3 = (1,1,0,0,1,0,0,0)$.
\item \textit{Covering entries~$5$ and~$6$ using~$g_2$}: Since $1\cdot(1,1)$ and $0\cdot(1,1)$ are equidistant from $(v_5,v_6) = (0,1)$, both substitutions $v\leftarrow v -0\cdot g_2$ and $v\leftarrow v -1\cdot g_2$ are eligible. Hence,
\begin{itemize}
    \item Choosing $v \leftarrow v-1\cdot g_2 = (0,0,0,0,0,1,0,0)$ concludes the algorithm, since $v_0 - v = (1,0,0,1,1,0,0,1)$ is a codeword close enough to~$v_0$ to be within our bound. 
    \item Choosing $v \leftarrow v-0\cdot g_2 = (0,0,0,0,0,1,0,0)$ requires proceeding to the next step.
\end{itemize}
\item \textit{Covering entries~$1$ through~$4$ using~$g_1$}: We again find that $0\cdot(1,1,1,1)$ and $1\cdot (1,1,1,1)$ are equidistant from $(v_1,\ldots,v_4)=(1,1,0,0)$, and hence either one can be selected. Choosing the coefficient~$1$, we get that $v = (0,0,1,1,1,0,0,0)$, so the codeword we find in this instance is $v_0 - v = (1,0,1,0,0,1,0,1)$. It is of distance~$3$ from $v$, hence within the bound as well.
\end{itemize}

\section{Experimental results}\label{section:experimental}
    Since the bound of Theorem \ref{ghwcoveringbound} does not have a closed form, we are left to compare it with the bounds given in~\cite{elimelech2022rmgeneralized} computationally. Asymptotically it appears that in most cases our bound is either equal or better than the others. Some example plots are given below, where~$M_t(r,m)$ is the best bound in~\cite{elimelech2022rmgeneralized} for the~$t$'th covering radius of~$RM_2(r,m)$, and $\mu_t$ is our bound. We note that this comparison is slightly unfair; bounds tighter than $M_t$ can be computed using their algorithm which does not have any closed form. 
    \begin{figure}[h]
  \centering
  \includegraphics[width=.8\linewidth]{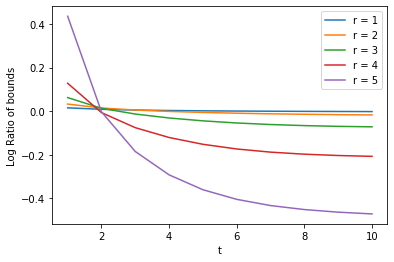}
  \caption[Comparison of upper bounds]{A plot of $\log(\mu_t(RM(r,12))/R_t(r,12))$ for various values of $r$ and $t$.} 
  \label{fig:sub1}
    \end{figure}%

    \begin{figure}[h]
  \centering
  \includegraphics[width=.8\linewidth]{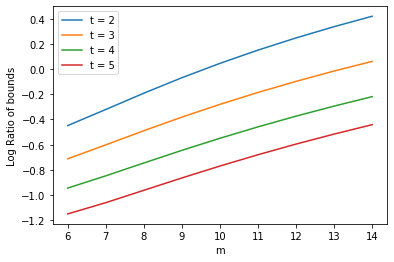}
  \caption[Comparison of upper bounds in $m$ for $s = 5$]{A plot of $\log(\mu_t(RM(m-5,m))/M_t(m-5,m))$ for various values of $m$ and $t$.}
  \label{fig:sub1}
\end{figure}%
\begin{figure}[h]
  \centering
  \includegraphics[width=.8\linewidth]{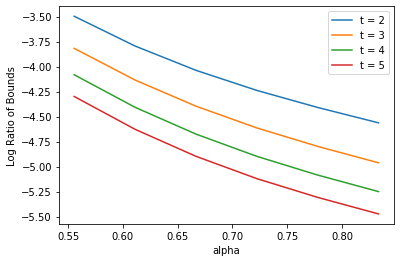}
  \caption[Comparison of upper bounds in $\alpha$ for $m = 18$]{A plot of $\log(\mu_t(RM(\alpha\cdot 18,18))/M_t(\alpha\cdot 18,18))$ for various values of $\alpha$ and $t$.}
  \label{fig:sub1}
\end{figure}
\begin{figure}[h]
  \centering
  \includegraphics[width=.8\linewidth]{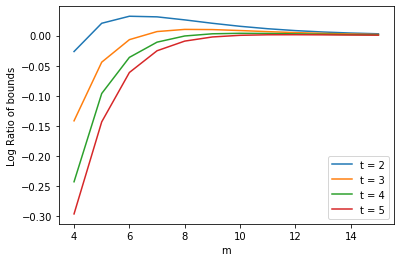}
  \caption[Comparison of upper bounds in $m$ for $r = 1$]{A plot of $\log(\mu_t(RM(1,m))/M_t(1,m))$ for various values of $m$ and $t$.}
  \label{fig:sub1}
\end{figure}%
\begin{figure}[h]
  \centering
  \includegraphics[width=.8\linewidth]{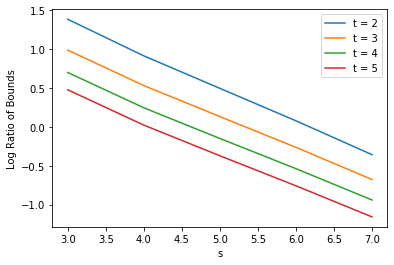}
  \caption[Comparison of upper bounds in $s$ for $m = 15$]{A plot of $\log(\mu_t(RM(15-s,15))/M_t(15-s,15))$ for various values of $s$ and $t$.}
  \label{fig:sub1}
\end{figure}%

\begin{figure}[h]
  \centering
  \includegraphics[width=.8\linewidth]{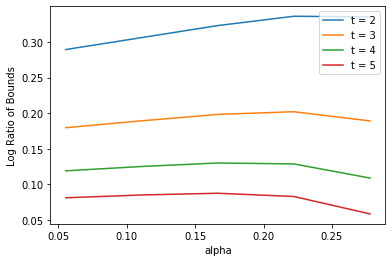}
  \caption[Comparison of upper bounds in $\alpha$ for $m = 18$]{A plot of $\log(\mu_t(RM(\alpha \cdot  18,18))/M_t(\alpha\cdot 18,18))$ for various values of $\alpha$ and $t$.}
  \label{fig:sub1}
\end{figure}%

\begin{figure}[h]
  \centering
  \includegraphics[width=.8\linewidth]{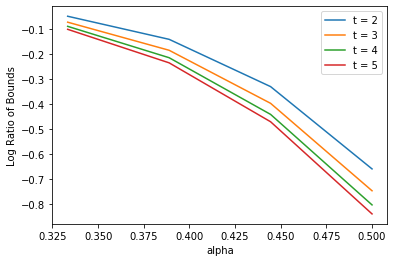}
  \caption[Comparison of upper bounds in $\alpha$ for $m = 18$]{A plot of $\log(\mu_t(RM(\alpha\cdot 18,18))/M_t(\alpha\cdot 18,18))$ for various values of $\alpha$ and $t$.}
  \label{fig:sub1}
  \end{figure}%
\begin{figure}[h]
  \centering
  \includegraphics[width=.8\linewidth]{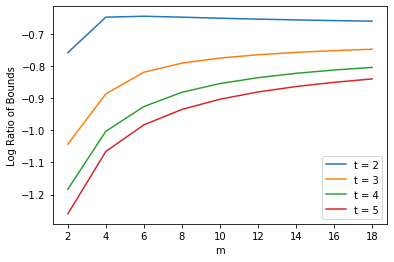}
  \caption[Comparison of upper bounds in $m$ for $\alpha = .5$]{A plot of $\log(\mu_t(RM(.5m,m))/M_t(.5m,m))$ for various values of $m$ and $t$.}
  \label{fig:sub1}
\end{figure}%
\begin{figure}[h]
  \centering
  \includegraphics[width=.8\linewidth]{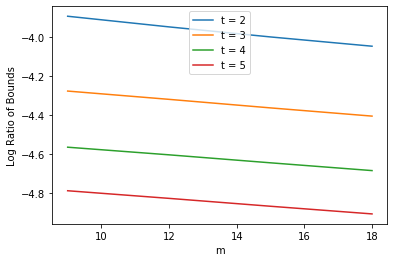}
  \caption[Comparison of upper bounds in $m$ for $\alpha = \frac{2}{3}$]{A plot of $\log(\mu_t(RM(\frac{2}{3}m,m))/M_t(\frac{2}{3}m,m))$ for various values of $m$ and $t$.}
  \label{fig:sub1}
\end{figure}%
\else
\fi
\end{document}